\documentclass[letterpaper,11pt]{article}

\usepackage{bm}
\usepackage{url}
\usepackage{array}
\usepackage{color}
\usepackage{amsthm}
\usepackage{dsfont}
\usepackage{amsmath}
\usepackage{amssymb}
\usepackage{authblk}
\usepackage{caption}
\usepackage{amsfonts}
\usepackage{booktabs}
\usepackage{graphicx}
\usepackage{mathrsfs}
\usepackage{multicol}
\usepackage{multirow}
\usepackage{cellspace}
\usepackage{enumerate}
\usepackage{enumitem}
\usepackage{makecell}

\usepackage{hyperref}
\hypersetup{
	linktoc      = all,
	colorlinks   = true,
	urlcolor     = blue,
	linkcolor    = blue,
	citecolor    = red
}

\usepackage[backend=biber, isbn=false, style=alphabetic, backref=true, doi=true, url=false, maxcitenames=10, mincitenames=10, maxalphanames=10, maxbibnames=10, minbibnames=5, minalphanames=5, defernumbers=true, sortlocale=en_US]{biblatex}
\usepackage[T1]{fontenc}
\usepackage[dvipsnames]{xcolor}
\usepackage[margin=1in]{geometry}
\usepackage[capitalize,noabbrev]{cleveref}
\usepackage[ruled,vlined,linesnumbered]{algorithm2e}
\usepackage[caption=false,font=normalsize,labelfont=sf,textfont=sf]{subfig}

\usepackage{titlesec}
\titlespacing*{\paragraph}{0pt}{1ex}{1ex}

\setlist[itemize]{leftmargin=*}
\setlist[enumerate]{leftmargin=*}
\setlist[enumerate]{label=(\arabic*)}

\newtheorem{theorem}{Theorem}[section]
\newtheorem{lemma}[theorem]{Lemma}

\newtheorem{definition}[theorem]{Definition}

\crefname{algocf}{Algorithm}{Algorithms}
\Crefname{algocf}{Algorithm}{Algorithms}

\crefname{algocfline}{Line}{Lines}
\Crefname{algocfline}{Line}{Lines}

\crefname{lemma}{Lemma}{Lemmas}
\Crefname{lemma}{Lemma}{Lemmas}

\DeclareMathOperator{\polylog}{polylog}
\DeclareMathOperator{\nnz}{nnz}
\DeclareMathOperator{\vol}{vol}
\DeclareMathOperator{\supp}{supp}

\DeclareMathOperator{\tvol}{\widetilde{vol}}

\def\eps{\varepsilon}

\def\R{\mathbb{R}}

\def\tO{\widetilde{O}}

\def\mat{\mathbf}

\def\A{\mat{A}}
\def\D{\mat{D}}

\def\L{\mat{L}}
\def\M{\mat{M}}
\def\N{\mat{N}}
\def\S{\mat{S}}
\def\mzero{\mat{0}}

\def\vec{\boldsymbol}

\def\b{\vec{b}}
\def\d{\vec{d}}
\def\e{\vec{e}}
\def\p{\vec{p}}
\def\r{\vec{r}}
\def\s{\vec{s}}
\def\x{\vec{x}}
\def\y{\vec{y}}

\def\vdelta{\vec{\delta}}

\def\vzero{\vec{0}}
\def\vone{\vec{1}}

\def\pr{\vec{\mathrm{pr}}}

\def\tp{\tilde{\p}}
\def\tx{\tilde{\x}}
\def\tr{\tilde{\r}}

\def\xast{\x^{\ast}}
\def\Sast{S^{\ast}}

\def\PageRankNibble{\textup{\texttt{PageRank-Nibble}}\xspace}
\def\push{\textup{\texttt{push}}\xspace}
\def\activeSet{\textup{\texttt{ActiveSetPageRank}}\xspace}

\title{A Simple Active-Set Method for PageRank-Based \\ Local Graph Clustering}

\author{Zhewei Wei}
\author{Mingji Yang}

\affil{Renmin University of China \authorcr
	\{zhewei,kyleyoung\}@ruc.edu.cn}

\date{}

\begin{document}

\maketitle

\begin{abstract}

Local graph clustering aims to find a well-connected cluster near a given seed node without exploring the entire graph.
A key step in the classic local clustering algorithm of Andersen, Chung, and Lang (ACL; Internet Math. 2007) is to approximate the PageRank vector from the seed node.
Their local push method computes an ACL $\eps$-approximate PageRank vector with teleportation parameter $\alpha$ in $O\bigl(1/(\alpha\eps)\bigr)$ time.

We give an algorithm that computes an ACL $\eps$-approximate PageRank vector in $\tO\bigl(1 / \eps^2\bigr)$ time with high probability.
This bound is independent of the graph size and has only a polylogarithmic dependence on $1 / \alpha$, albeit with a quadratic dependence on $1 / \eps$.
As a direct consequence, we obtain a new running-time tradeoff between the target conductance and target volume in local graph clustering.
Our method also applies to the optimization problem of $\ell_1$-regularized PageRank and computes an additive approximate minimizer with a polylogarithmic dependence on $1/\alpha$, improving the $1/\sqrt{\alpha}$ dependence in the previous bound of Mart\'inez-Rubio, Wirth, and Pokutta (COLT 2023).

Our algorithm is based on an intuitive process that maintains a growing active set of nodes: it performs push operations on the current set until convergence and then expands the set and repeats the process if necessary.
We show that for each active set, the corresponding limiting state is the solution to a symmetric diagonally dominant (SDD) linear system on the set.
We apply nearly-linear-time SDD solvers to these systems and prove that the approximation preserves the properties of the push process.

\end{abstract}

\section{Introduction} \label{sec:intro}

Given a seed node in an undirected graph, local graph clustering aims to find a well-connected cluster near the node without exploring the entire graph.
In the influential work by Spielman and Teng~\cite{spielman2004nearly}, they initiated the study of local clustering algorithms and applied them to graph partitioning, graph spectral sparsification, and nearly-linear-time Laplacian solvers.

Subsequently, Andersen, Chung, and Lang~\cite{andersen2006local,andersen2007pagerank} introduced the improved method \PageRankNibble, which computes an approximate PageRank vector~\cite{brin1998anatomy} from the seed node and uses it to identify a cluster with small conductance.
The analysis for the clustering guarantees was later simplified in \cite{andersen2007detecting}.
In this paper, we follow this PageRank-based local clustering framework and focus on the complexity of computing an approximate PageRank vector that can yield local clustering guarantees.

For a seed node $s$ and teleportation parameter $\alpha \in (0,1)$, the PageRank value from $s$ to a node $v$ can be viewed as the probability that a random walk that starts from $s$ terminates at $v$, where it terminates with probability $\alpha$ at each step.
The framework of \cite{andersen2007pagerank,andersen2007detecting} uses a specific notion of $\eps$-approximate PageRank vector, which is later defined in \cref{def:ACL}.
The classic \push algorithm in \cite{andersen2007pagerank} computes an $\eps$-approximate PageRank vector in $O\bigl(1 / (\alpha\eps)\bigr)$ time, but its query complexity is only $O(1 / \eps)$.
This raises the question of whether the linear dependence on $1 / \alpha$ in their running-time bound can be improved.

In this paper, we present a simple active-set method that computes an $\eps$-approximate PageRank vector in $\tO(1 / \eps^2)$ time with high probability, showing that the complexity can have only a polylogarithmic dependence on $1 / \alpha$ while remaining independent of the graph size.
This gives a new running-time tradeoff between $\alpha$ and $\eps$ for PageRank computation, and furthermore, applying this result to the local clustering algorithm of \cite{andersen2006local,andersen2007detecting} directly yields a new tradeoff between the target conductance and target volume.

Recently, PageRank-based local clustering has also gained attention from a variational perspective.
\cite{fountoulakis2019variational} proposed an $\ell_1$-regularized variational formulation of approximate PageRank computation and showed that a standard optimization method solves the problem locally with a $1 / \alpha$ dependence in the running time.
Later, Fountoulakis and Yang~\cite{fountoulakis2022open} asked whether accelerated optimization methods could improve the dependence of $1 / \alpha$ to $1 / \sqrt{\alpha}$ without destroying locality.
Mart\'inez-Rubio, Wirth, and Pokutta~\cite{rubio2023accelerated} showed that achieving the $1 / \sqrt{\alpha}$ dependence is possible if we allow a larger dependence on the support size of the optimal solution.
By applying our method to $\ell_1$-regularized PageRank, we show that if such a larger dependence on the support size is allowed, then the $1 / \sqrt{\alpha}$ factor can be improved to $\polylog(1/\alpha)$.

\subsection{Problem Formulation}

Let $G = (V,E)$ be a connected, unweighted, undirected simple graph with $n := |V| \ge 2$ nodes.
Let $\A$ be its adjacency matrix, $\D$ be its diagonal degree matrix, and $\d$ be its degree vector.
For a node $v \in V$, let $N(v)$ denote the set of its neighbors, and let $\e_v \in \R^V$ denote its standard basis vector.
We define the volume $\vol(S) := \sum_{v \in S}\d(v)$ and internal volume $\tvol(S) := |S| + \bigl|\bigl\{(u,v) \in E: u \in S, v \in S\bigr\}\bigr|$ for $S \subseteq V$, and let $\supp(\x) := \bigl\{v : \x(v) \ne 0\bigr\}$ denote the support for $\x \in \R^V$.
For $\varnothing \subsetneq S \subsetneq V$, we define the conductance
\begin{align*}
	\phi(S) := \frac{\bigl|\bigl\{(u,v) \in E : u \in S, v \in V \setminus S\bigr\}\bigr|}{\min\bigl\{\vol(S),\vol(V\setminus S)\bigr\}}.
\end{align*}
For the local algorithms, we assume the standard adjacency-list model, where each degree query takes $O(1)$ time and each neighbor query returns one neighbor in $O(1)$ time.

For any source vector $\s \in \R^V$ and a \textit{teleportation parameter} $\alpha \in (0,1)$, we use $\pr_{\alpha}(\s) \in \R^V$ for its PageRank vector, which is defined as the unique solution to the linear system
\begin{align}
	\pr_{\alpha}(\s) = \alpha\s + (1-\alpha) \A\D^{-1} \pr_{\alpha}(\s). \label{eqn:PageRank}
\end{align}
Note that we do not regard $\alpha$ as a constant in this paper.
We focus on the following sense of approximate PageRank vector as considered by \cite{andersen2007pagerank,andersen2007detecting}.

\begin{definition}[ACL $\eps$-approximate PageRank vector~\cite{andersen2007pagerank,andersen2007detecting}] \label[definition]{def:ACL}
	A vector $\p \in \R^V_{\ge 0}$ is called an $\eps$-approximate PageRank vector (w.r.t. seed node $s$ and teleportation parameter $\alpha$) if there exists a vector $\r$ such that $\p = \pr_{\alpha}(\e_s-\r)$ and $\vzero \le \r \le \eps \d$ entrywise.
\end{definition}

Our main goal is to compute an ACL $\eps$-approximate PageRank vector with support volume $O(1 / \eps)$.
We also study the following $\ell_1$-regularized PageRank problem, which is equivalent to the ones in previous works~\cite{fountoulakis2019variational,fountoulakis2022open,rubio2023accelerated} up to a variable scaling.

\paragraph{$\ell_1$-regularized PageRank.}
Given $G$, $s$, $\alpha$, and a regularization parameter $\rho > 0$, the $\ell_1$-regularized PageRank problem~\cite{fountoulakis2019variational} is to minimize
\begin{align}
	\psi(\x) := \frac{1}{2} \, \x^{\top} \bigl(\D - (1 - \alpha)\A\bigr) \x - \alpha\,\e_s^{\top}\x + \alpha\rho\|\D\x\|_1 \label{eqn:regularized-objective}
\end{align}
over $\x \in \R^V$.
It is shown in \cite{fountoulakis2019variational} that $\psi(\cdot)$ has a unique minimizer $\xast \ge \vzero$ and $\D\xast$ is an ACL $\rho$-approximate PageRank vector.
We study the problem of computing a $\xi$-additive approximate minimizer of $\psi(\cdot)$ for a given $\xi > 0$, i.e., computing a vector $\tx \in \R^V$ such that $\psi(\tx) - \psi(\xast) \le \xi$.

\subsection{Our Results}

Our main result is a randomized algorithm that computes an ACL $\eps$-approximate PageRank vector in $\tO(1 / \eps^2)$ time with high probability, stated as follows.

\begin{theorem} \label{thm:ACL}
	Given a seed node $s \in V$, $0 < \alpha < 1$, $\eps > 0$, and $0 < \delta < 1$, there exists an algorithm that satisfies the following properties with probability at least $1 - \delta$:
	\begin{itemize}
		\item it returns an ACL $\eps$-approximate PageRank vector with support volume at most $2/\eps$;
		\item its running time is
		\begin{align*}
			O\Bigl( \frac{1}{\eps^2} \polylog\frac{1}{\alpha\eps\delta} \Bigr) = \tO\Bigl( \frac{1}{\eps^2} \Bigr).
		\end{align*}
	\end{itemize}
\end{theorem}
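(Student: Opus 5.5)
We would realize the active-set process from the abstract and show that each phase costs time nearly linear in the internal volume of the current set. We maintain a set $S \ni s$, a vector $\p$ supported on $S$, and the residual $\r = \e_s - (\D - (1-\alpha)\A)\D^{-1}\p/\alpha$. This is the residual for which $\p = \pr_\alpha(\e_s - \r)$, rewritten via \eqref{eqn:PageRank}.

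The idealized phase on a fixed $S$ runs \push only at nodes of $S$ until convergence. Its limit is characterized by the condition that $\r(v) = 0$ for all $v\in S$. Writing $\p = \D\x$, this is the SDD system
\[
\bigl(\D_{SS} - (1-\alpha)\A_{SS}\bigr)\x_S = \alpha \e_s .
\]
Every node outside $S$ then has residual $\r(u) = \tfrac{1-\alpha}{\alpha}\sum_{v\in N(u)\cap S}\x(v) \ge 0$, and the system matrix is an M-matrix, so $\x_S \ge \vzero$. After solving, we compute $\r$ on the boundary. We then add to $S$ every $u\notin S$ with $\r(u) > \eps\d(u)$ and repeat. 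The process stops when no such $u$ exists, at which point $\vzero\le\r\le\eps\d$ holds and $\p$ is ACL $\eps$-approximate.

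\textbf{Monotonicity and volume bound.} The first key step is to compare with the push process. Because the limit on $S$ equals what push would reach, and push only ever decreases residual mass, the whole active-set process is a (limit of a) valid push sequence. Every node admitted to $S$ had residual exceeding $\eps\d(u)$ at some time. The standard ACL argument, in which $\|\r\|_1$ decreases by at least $\alpha\eps\d(u)$ per push, is not directly available here. Instead, we would use that $\|\p\|_1 \le 1$ and that each admitted node eventually absorbs probability. Concretely, along the monotone push path a node with residual at least $\eps\d(u)$ at some moment gets $\p(u)\ge \alpha\eps\d(u)$ at the limit, which alone would give volume $1/(\alpha\eps)$. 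To remove the $\alpha$, we would use the fixed-point bound $\p(u) \ge \r_{\mathrm{old}}(u)$ coming from the old boundary residual: once $u\in S$, the new solution gives $u$ at least its incoming residual. Hence $\vol(S) \le \sum_u \p(u)/\eps \le 1/\eps$. This monotonicity argument ($\p$ is coordinatewise nondecreasing as $S$ grows, by M-matrix monotonicity) is the part I would verify carefully.

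\textbf{Approximate solves and running time.} Exact solves are replaced by nearly-linear-time SDD solvers with accuracy $\mathrm{poly}(\alpha\eps\delta)$. Errors must not create negative residuals or spurious admissions. We handle this in two ways. We admit $u$ only if the approximate residual exceeds $\eps\d(u)/2$, and we rerun the analysis with threshold $\eps/2$, which gives the stated $2/\eps$. We also slightly shrink $\p$ (for example, take $(1-\eta)\tx$) so that the residual stays nonnegative. The condition number of the system is $O(\vol(S)/\alpha)$, so the dependence of the solver is only $\log(1/\alpha\eps\delta)$. Each phase costs $\tO(\tvol(S)) = \tO(1/\eps)$, and there are at most $1/\eps$ phases, since $\vol(S)$ grows by at least one each time. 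The total is $\tO(1/\eps^2)$, and a union bound over the solver failures gives probability $1-\delta$. The main obstacle is making the approximate solve preserve the monotone push structure, namely nonnegativity of $\r$ and the admission bound. I would attack it by bounding the $\ell_\infty$ error of $\D^{-1}$-scaled solutions through the energy norm and $\lambda_{\min}\ge\alpha$.
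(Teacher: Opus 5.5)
There is a genuine gap, and it is the step you flagged. When the residual on $S$ is driven to zero, the claimed bound $\p(u)\ge \r_{\mathrm{old}}(u)$ is false.

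Monotonicity only gives $\p(u)\ge\alpha\,\r_{\mathrm{old}}(u)$. Each unit of residual pushed out of $u$ leaves just $\alpha$ at $u$ and sends $1-\alpha$ onward. For example, let $s$ be a leaf whose only neighbor $u$ has $k$ further leaf neighbors, with $\eps(k+1)<1-\alpha$ so that $u$ is admitted. Then $\r_{\mathrm{old}}(u)=1-\alpha$, but activating $u$ yields $\p(u)=\alpha(1-\alpha)(k+1)/\bigl(k+1-(1-\alpha)^2\bigr)\approx\alpha(1-\alpha)$. So you only get $\vol(S)\le 1/(\alpha\eps)$, which is exactly the $1/\alpha$ dependence the theorem removes.

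The zero-residual process itself fails the theorem, not just the proof. Take a long path $s=v_0,v_1,v_2,\dots$ and $S=\{v_0,\dots,v_k\}$. Then $\r(v_{k+1})$ is the probability that the $\alpha$-discounted walk from $s$ reaches $v_{k+1}$ before terminating. Solving the restricted system shows this equals $1/\cosh\bigl(\theta(k+1)\bigr)$ with $\cosh\theta=1/(1-\alpha)$, so $\theta\approx\sqrt{2\alpha}$. Your rule admits one node per phase until $k\approx\ln(1/\eps)/\sqrt{2\alpha}$. The support volume and the number of phases are therefore both $\Theta\bigl(\log(1/\eps)/\sqrt{\alpha}\bigr)$, and the total time is $\Omega\bigl(\log^2(1/\eps)/\alpha\bigr)$. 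Once $\alpha$ is small compared with $\eps^2$, this violates both the $2/\eps$ volume bound and the $\tO(1/\eps^2)$ time bound. Your phase count also rested on the volume bound, so it fails with it.

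The missing idea is to stop pushing before the residual on $S$ vanishes. Solve $\L_{\alpha,S}\x_S=\alpha(\e_s|_S-\lambda\d_S)$ with $\lambda=\eps/2$, so every active node keeps residual $\lambda\d(v)$; this plays the role of laziness in ACL's push, which is where their volume bound comes from. Admit boundary nodes only above $(\lambda+\kappa)\d(v)$ with $\kappa=\eps/2$. The mass identity, together with $\x^{(S)}\ge\vzero$ and $\r(\x^{(S)})\ge\vzero$, then gives $\lambda\vol(S)\le 1$ directly, with no per-node lower bound on $\p$. This also bounds the number of phases by $2/\eps$.

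The same retained residual absorbs the solver error. With accuracy $0.01\alpha\min\{\lambda,\kappa\}$, the computed residual on $S$ stays within $0.1\lambda\d(v)$ of $\lambda\d(v)$, hence nonnegative. Your shrinking trick cannot do this, since $\r\bigl((1-\eta)\tx\bigr)=\eta\,\e_s+(1-\eta)\,\r(\tx)$ leaves any negative residual at $v\in S\setminus\{s\}$ negative.

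Finally, the admission gap $\kappa$ handles the boundary residuals. Newly admitted $v$ satisfy $\x^{(S\cup T)}(v)\ge\alpha\bigl(\r(\x^{(S)})(v)-\lambda\d(v)\bigr)/\d(v)>0.9\alpha\kappa$, which keeps every exact coordinate above the entrywise solver error $0.1\alpha\kappa$. That gives $\tx\ge\vzero$ and hence nonnegative residuals on the boundary.
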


In comparison, the original \push algorithm~\cite{andersen2007pagerank} computes an ACL $\eps$-approximate PageRank vector in deterministic $O\bigl(1/(\alpha\eps)\bigr)$ time with support volume $O(1/\eps)$.
Therefore, our result improves the dependence on $1/\alpha$ from linear to polylogarithmic, but at the cost of a quadratic dependence on $1/\eps$.
This is not a uniform improvement, and the two algorithms emphasize different tradeoffs between $\alpha$ and $\eps$.
Ignoring polylogarithmic factors, our bound is better than $O\bigl(1/(\alpha\eps)\bigr)$ when $\alpha = o(\eps)$.
Also, note that the $\tO(1/\eps^2)$ time bound is still independent of the graph size.

\paragraph{Consequence for local graph clustering.}
The local clustering approaches in \cite{andersen2007pagerank,andersen2007detecting} use as a black box an algorithm that computes an ACL $\eps$-approximate PageRank vector with support volume $O(1 / \eps)$.
Since \cref{thm:ACL} provides exactly this guarantee, we can replace the original \push algorithm by our algorithm in the local clustering framework, and this only affects the running time of the local clustering algorithm.
Specifically, we obtain that, for many seed nodes within a set $C$ of conductance $O\bigl(\phi^2/ \log^2 n \bigr)$, with probability at least $1-\delta$ we can find a cut with conductance at most $\phi$ and volume $O(k)$ in time
\begin{align*}
	O\Bigl(k^2 \log^2 n \polylog\frac{k \log n}{\phi \delta} \Bigr),
\end{align*}
provided that we can guess the volume $k$ of the smaller side of the cut within a factor of $2$.
For a formal statement, we refer the reader to \cite[Theorems~6.1 and 6.2]{andersen2007pagerank} or \cite[Theorems~2 and 3]{andersen2007detecting}.
In comparison, their complexities in the same statements are both $O\bigl(k \log^2 n / \phi^2\bigr)$.
Similarly, regarding the target conductance $\phi$ and target volume $k$, our result reduces the dependence on $1/\phi$ from quadratic to polylogarithmic, but increases the dependence on $k$ from linear to quadratic.

Our second result is an algorithm for computing an additive approximate minimizer of the $\ell_1$-regularized PageRank problem in \eqref{eqn:regularized-objective}.

\begin{theorem} \label{thm:regularized}
	Given a seed node $s \in V$, $\alpha \in (0,1)$, $\rho > 0$, $\xi > 0$, and $\delta \in (0,1)$ and letting $\xast$ be the unique minimizer for $\psi(\cdot)$ and $\Sast := \supp(\xast)$, there exists an algorithm that satisfies the following properties with probability at least $1 - \delta$:
	\begin{itemize}
		\item it returns $\tx$ such that $\psi(\tx)-\psi(\xast) \le \xi$, $\D\tx$ is an ACL $(2\rho)$-approximate PageRank vector, and $\vol\bigl(\supp(\tx)\bigr) \le 1 / \rho$;
		\item its running time is
		\begin{align*}
			O\Bigl( |\Sast|\tvol(\Sast)\polylog\frac{1}{\alpha\rho\xi\delta} + |\Sast|\vol(\Sast) \Bigr) = \tO\bigl(|\Sast|\vol(\Sast)\bigr) = \tO\Bigl(\frac{1}{\rho^2}\Bigr).
		\end{align*}
	\end{itemize}
\end{theorem}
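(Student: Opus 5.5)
We prove \cref{thm:regularized} with the active-set process from the abstract, applied directly to the optimality conditions of $\psi$. Substitute $\p = \D\x$ and $\r$ with $\alpha(\e_s - \r) = \bigl(\D-(1-\alpha)\A\bigr)\x$. Then the KKT conditions of \eqref{eqn:regularized-objective} read: $\r(v) = \rho\,\d(v)$ on $\Sast$, $\r(v) \le \rho\,\d(v)$ off $\Sast$, and $\xast \ge \vzero$. This is exactly the fixed point of the \push process with threshold $\rho$ in which a node is pushed until its residual drops to $\rho\d(v)$ (not $0$).

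\textbf{The algorithm.} Maintain an active set $S$, starting from $S=\{s\}$. Given $S$, the limiting state of pushing on $S$ is the solution of the SDD system
\[
\bigl(\D_{SS}-(1-\alpha)\A_{SS}\bigr)\x_S = \alpha(\e_s-\rho\d)_S .
\]
\begin{itemize}
\item Solve this system approximately with a nearly-linear-time SDD solver in $O\bigl(\tvol(S)\polylog(1/(\alpha\rho\xi\delta))\bigr)$ time. The matrix has condition number $O(\mathrm{poly}(|S|)/\alpha)$ on $S$, which gives the $\polylog(1/\alpha)$ dependence.
\item Compute residuals on $N(S)\setminus S$ in $O(\vol(S))$ time.
\item If some outside node $v$ has $\r(v) > \rho\d(v)$, which means the $\ell_1$ subgradient condition fails, add all such nodes to $S$ and repeat. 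Otherwise stop.
\end{itemize}

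\textbf{Monotonicity and iteration count.} The key invariant is $S\subseteq\Sast$. I would prove this by a monotonicity argument for exact solves: the limiting vector on $S$ is entrywise at most $\xast$ and nonnegative. This holds because push operations only increase $\x$ and the M-matrix inverse is nonnegative. Hence any outside node with residual above threshold must be in $\Sast$, since residuals only grow as $\x$ grows elsewhere.

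This invariant gives at most $|\Sast|$ expansions. Each round costs $O\bigl(\tvol(S)\polylog + \vol(S)\bigr) \le O\bigl(\tvol(\Sast)\polylog + \vol(\Sast)\bigr)$, so the total matches the stated bound. We also need $\vol(\Sast)\le 1/\rho$, which follows from summing $\r=\rho\d$ on $\Sast$ against $\|\e_s-\r\|$-mass $\ge 0$; this gives the $\tO(1/\rho^2)$ form and the support-volume claim.

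\textbf{Main obstacle: robustness to approximate solves.} The hardest step is showing that approximate solves preserve these properties. I would run the solver to accuracy $\eta = \mathrm{poly}(\alpha\rho\xi)/\mathrm{poly}(n')$, where $n'$ is a local size, and then take the following steps.
\begin{itemize}
\item Shift the solution slightly downward, e.g.\ solve with threshold $\rho(1+\theta)$ or clip, so that the approximate $\tx_S \le \xast_S$ still holds with $\tx\ge\vzero$. This keeps the invariant $S\subseteq\Sast$ and keeps residuals in $[0, 2\rho\d]$, which explains the ACL $(2\rho)$ guarantee.
\item On termination, bound $\psi(\tx)-\psi(\xast)$ via strong convexity and smoothness of the quadratic restricted to $\Sast$. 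The gap is at most the squared norm of the residual violation, which is $O(\eta\cdot\mathrm{poly})\le\xi$.
\end{itemize}
A union bound over at most $|\Sast|\le 1/\rho$ randomized solver calls, each with failure probability $\delta\rho$, gives overall success probability $1-\delta$.
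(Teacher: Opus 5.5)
Your exact-solve skeleton matches the paper: you solve at level $\rho$ on the active set, get $\x^{(S)}_S\le\xast_S$ from the entrywise-nonnegative inverse of $\L_{\alpha,S}$ (\cref{lem:no-false-activation}), and conclude $S\subseteq\Sast$, at most $|\Sast|$ rounds, and $\vol(\Sast)\le 1/\rho$ from the mass identity. The gap is in the step you flag as the main obstacle, and it is a missing idea, not a missing detail.

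The ACL conclusion needs $\tx\ge\vzero$ and $\r(\tx)\ge\vzero$, together with residues on $S$ close to the target level. Because the solver error is two-sided, this hinges on the \emph{exact} iterates $\x^{(S)}$ being nonnegative. Otherwise either $\tx$ has negative entries, or, after clipping, the residues on $S$ are no longer controlled. That nonnegativity comes from an induction through \cref{lem:active-set-expansion}, which requires every activated node to have exact residue strictly above the level being solved for. Your proposal never secures this, and the device you offer breaks it.

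Suppose you solve at level $\rho(1+\theta)$ while still activating when the residue exceeds $\rho\,\d(v)$. Then a node $v$ with exact residue in $\bigl(\rho\,\d(v),\rho(1+\theta)\,\d(v)\bigr)$ is activated. With $T=\{v\}$, the analogue of \eqref{eqn:active-set-expansion-system} has a right-hand side that is $\le 0$ and strictly negative at $v$. Hence $\x^{(S\cup T)}\le\x^{(S)}$, and the $v$-row forces $\x^{(S\cup T)}(v)<0$. Clipping negative entries to zero makes $\tx\ge\vzero$ only cosmetically. Neither the ACL bound nor your strong-convexity gap estimate follows, since the latter needs the violation on $S$ to be tiny. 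Also, the property $\tx_S\le\xast_S$ that you aim for rules out false activations but says nothing about nonnegativity.

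The missing idea is to make activated nodes \emph{certifiably} exceed the solving level despite solver error. The paper does this with an activation gap:
\begin{itemize}
\item Solve at $\lambda=\rho$, but activate only when $\tr(v)>(\rho+\tau)\,\d(v)$, with $\tau=\min\{\rho,\xi/(2\alpha)\}$ and relative accuracy $0.01\alpha\tau$.
\item Via $\alpha\D_S\preceq\L_{\alpha,S}$, the energy-norm guarantee gives coordinate error at most $0.1\alpha\tau$ and boundary-residue error at most $0.1\tau\,\d(v)$ (\cref{lem:active-set-solver-error}).
\item So activated nodes have exact residue above $(\rho+0.9\tau)\,\d(v)$. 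Induction then gives $\x^{(S)}(v)>0.9\alpha\tau$ on $S$, hence $\tx\ge\vzero$, and \cref{lem:no-false-activation} still yields $S\subseteq\Sast$.
\item Residues outside $S$ may now reach $(\rho+\tau)\,\d(v)$, which is not a polynomially small violation. So the objective is bounded by the first-order inequality $\psi(\tx)-\psi(\xast)\le\langle\nabla\psi(\tx),\tx-\xast\rangle\le 0.1\alpha\tau\,\d^{\top}\tx+\alpha\tau\,\d^{\top}\xast\le 1.1\alpha\tau\le\xi$, not by a squared-norm bound. This is what ties $\tau$ to $\xi/(2\alpha)$.
\end{itemize}

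Your no-gap design could instead be repaired with a one-sided error. Replace $\tx_S$ by the entrywise $\max\{\tx_S-\eta'\vone,\vzero\}$, where $\eta'$ bounds the coordinate error. Then prove inductively that $\x^{(S)}\ge\vzero$, so that the corrected vector lies between $\vzero$ and $\x^{(S)}$ and the computed boundary residues are certified lower bounds on the exact ones. That induction is precisely what your proposal lacks.

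A minor point: nearly-linear-time SDD solvers do not depend on the condition number. The $\polylog(1/\alpha)$ factor enters because the relative accuracy must scale with $\alpha$ to turn the energy-norm guarantee into coordinatewise bounds.
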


For comparison, the ASPR algorithm in \cite{rubio2023accelerated} outputs a $\xi$-additive approximate minimizer in time
\begin{align*}
	O\Bigl( \frac{1}{\sqrt{\alpha}} \, |\Sast| \tvol(\Sast) \log\frac{1}{\alpha\rho\xi} + |\Sast| \vol(\Sast) \Bigr).
\end{align*}
Thus, up to logarithmic factors, our bound removes the $1 / \sqrt{\alpha}$ factor from the term involving $|\Sast| \tvol(\Sast)$.
Additionally, there is an important difference between the output guarantees.
Their accelerated algorithm is guaranteed to return an additive approximate minimizer, but this guarantee alone cannot be directly substituted into the existing ACL local clustering analyses.
In contrast, our algorithm returns $\tx$ such that $\D\tx$ is an ACL $(2\rho)$-approximate PageRank vector with support volume at most $1 / \rho$, so it can be used directly in the PageRank-based local clustering framework discussed above.

\cite{rubio2023accelerated} also gives a CDPR algorithm that computes the exact minimizer $\xast$ in time
\begin{align*}
	O\bigl(|\Sast|^3 + |\Sast|\vol(\Sast)\bigr)
\end{align*}
with space complexity $O\bigl(|\Sast|^2\bigr)$.
Since $\tvol(\Sast) = O\bigl(|\Sast|^2\bigr)$, our bound is no larger than this bound up to logarithmic factors.
On the other hand, their algorithm computes the exact minimizer, whereas our algorithm computes an additive approximate minimizer, so these two bounds provide different accuracy guarantees.

\paragraph{Remark (Using deterministic solvers).}
We remark that the randomness of our algorithm only comes from invoking randomized SDD solvers (see \cref{sec:overview}).
If we instead use deterministic almost-linear-time SDD solvers~\cite{chuzhoy2022deterministic}, then our whole algorithm becomes deterministic, and this removes the dependence on the failure probability $\delta$ in the running-time bounds in \cref{thm:ACL,thm:regularized} at the cost of an additional factor of $(1 / \eps)^{o(1)}$ and $|\Sast|^{o(1)}$, respectively.

\paragraph{Remark (Lazy and non-lazy random walks).}
Some related work defines PageRank with teleportation parameter $\alpha_{\mathrm{L}}$ using lazy random walks instead of ordinary random walks, e.g., \cite{andersen2007pagerank,fountoulakis2019variational,fountoulakis2022open,rubio2023accelerated}.
It is known that the lazy and non-lazy definitions give the same PageRank vector when $\alpha_{\mathrm L} = \alpha / (2 - \alpha)$~\cite{andersen2007pagerank}.
Consequently, all our results convert directly to the lazy-walk ones through a conversion between $\alpha_{\mathrm L}$ and $\alpha$.
Since $1 / \alpha_{\mathrm L} = (2 - \alpha) / \alpha = \Theta(1 / \alpha)$, the asymptotic dependence on the inverse teleportation parameter is unchanged.

\section{Overview} \label{sec:overview}

Our method can be intuitively understood based on a variant of the classic \push algorithm~\cite{andersen2007pagerank}.
This section provides a walk-through of the main ideas of our algorithm for computing an ACL $\eps$-approximate PageRank vector.
As preliminaries, we first introduce a combinatorial understanding of the original \push algorithm as a mass-propagation process.
After that, we describe our conceptual infinite-push process and outline how to turn it into an implementable algorithm.

\paragraph{$\alpha$-discounted random walk.}
Given the teleportation parameter $\alpha \in (0,1)$, the $\alpha$-discounted random walk is defined as the following Markov-chain process on $V$: at each step, the walk terminates at the current node with probability $\alpha$, and otherwise moves to a uniformly random neighbor of the current node.
It is known that for each $v \in V$, $\pr_{\alpha}(\e_s)(v)$ equals the probability that an $\alpha$-discounted random walk starting from $s$ terminates at $v$~\cite{avrachenkov2007monte}.

\paragraph{The \push operation.}
The original \push algorithm works by repeatedly performing \push operations.
A single \push operation can be understood as a mass-propagation step that corresponds to a step in the $\alpha$-discounted random walk.
Specifically, the algorithm maintains two vectors $\p,\r \in \R^V$, where the \textit{reserve vector} $\p$ records the probability mass that has been settled at each node, and the \textit{residue vector} $\r$ records the mass that has not been settled.
A \push operation on a node $v$ transfers $\alpha$ fraction of its residue $\r(v)$ to its reserve $\p(v)$ (corresponding to terminating at the current node with probability $\alpha$) and distributes the remaining $ (1-\alpha)$ fraction evenly among its neighbors (corresponding to walking to a random neighbor with probability $1 - \alpha$).
For later use, we slightly generalize the \push operation by allowing any amount $\Delta \in \bigl[0,\r(v)\bigr]$ of residue mass to be processed and give the pseudocode in \cref{alg:push}.

\begin{algorithm}[ht]
	\DontPrintSemicolon
	\caption{$\push(v,\alpha,\Delta)$~\cite{andersen2007pagerank}} \label{alg:push}
	\KwIn{$v \in V$, teleportation parameter $\alpha \in (0,1)$, amount $\Delta \in \bigl[0,\r(v)\bigr]$}
	$\r(v) \gets \r(v) - \Delta$ \;
	$\p(v) \gets \p(v) + \alpha\Delta$ \;
	\For{\textbf{\textup{each}} \textup{neighbor} $u$ \textup{of} $v$}
	{
		$\r(u) \gets \r(u) + (1 - \alpha)\Delta / \d(v)$ \;
	}
\end{algorithm}

\paragraph{The classic \push algorithm.}
The \push algorithm~\cite{andersen2007pagerank} repeatedly performs \push operations until the residue vector $\r$ satisfies $\vzero \le \r \le \eps \d$ entrywise.
Specifically, it starts from $\p := \vzero$ and $\r := \e_s$ and performs $\push\bigl(v,\alpha,\r(v)\bigr)$ whenever $\r(v) > \eps \, \d(v)$ for some $v$.
It is shown that the \push operations preserve the invariant $\p = \pr_{\alpha}(\e_s-\r)$ as required in the definition of an ACL approximate PageRank vector.
Clearly, throughout the \push algorithm, $\p$ is nondecreasing, $\r \ge \vzero$, and the total probability mass satisfies $\vone^{\top} \p + \vone^{\top} \r = 1$.
By the termination condition, it follows that the final $\r$ satisfies $\vzero \le \r \le \eps \d$ and thus the final $\p$ is an ACL $\eps$-approximate PageRank vector.

\paragraph{The conceptual infinite-push process.}
We consider a different conceptual strategy of performing the \push operations, which allows a better control of the nodes that have been pushed.
We maintain an active set $S$ of nodes, which is initialized to $\{s\}$, and perform an infinite number of \push operations only on nodes in $S$.
In the limiting state of this infinite-push process, all the residues on $S$ are small, and only nodes in the outer boundary $\partial S := \{v \in V\setminus S : N(v)\cap S \ne \varnothing\}$ can have nonzero residues that violate the constraint in the ACL condition.
Thus, we examine the residues on $\partial S$, and if none of them violate the constraint, we return the current $\p$ as an ACL $\eps$-approximation.
Otherwise, we expand $S$ to include those nodes that have large residues and repeat the infinite-push process on the new active set.
More specifically, the conceptual infinite-push algorithm consists of the following steps.
\begin{enumerate}
	\item Initialize $\p \gets \vzero$, $\r \gets \e_s$, and $S \gets \{s\}$.
	\item Loop over each node $v \in S$, and if $v$ has $\r(v) > 0.5\eps \, \d(v)$, perform $\push\bigl(v,\alpha,\r(v) - 0.5\eps \, \d(v)\bigr)$.
	Repeat this loop to the limit, so every node of $S$ has residue exactly $0.5\eps \, \d(v)$.
	\item Inspect the boundary $\partial S$ and set $T \gets \bigl\{v \in \partial S : \r(v) > \eps \, \d(v)\bigr\}$.
	\item If $T$ is empty, return $\p$ as an ACL $\eps$-approximation.
	Otherwise, set $S \gets S \cup T$ and jump to step~(2).
\end{enumerate}
We call one execution of steps (2)-(4) an \textit{iteration} of the algorithm.
Intuitively, this infinite-push process also ensures that $\p$ is nondecreasing, $\r$ is nonnegative, and $\vone^{\top} \p + \vone^{\top} \r = 1$.
Note that in step (2), we retain $0.5\eps \, \d(v)$ residue for each $v \in S$.
This strategy controls the volume of the active set $S$: at the end of any iteration, the total residue mass on $S$ equals $0.5\eps\vol(S)$, which is bounded above by $1$ since the total mass is $1$, so we have $\vol(S) \le 2 / \eps$.
As $|S| \le \vol(S)$ and each expansion in the algorithm adds at least one node to $S$, this strategy also guarantees that the number of iterations is at most $2 / \eps$.

The key observation in realizing this conceptual process as a real algorithm is that, in the limiting state, the invariant property relates $\p$ and $\r$ via an SDD linear system, allowing us to solve for $\p$ based on the value of $\r$ on $S$.
Thus, the infinite-push step above can be equivalently replaced by exactly solving a symmetric diagonally dominant (SDD) linear system on $S$.
To improve the efficiency, we solve the SDD systems approximately using nearly-linear-time SDD solvers~\cite{spielman2004nearly,spielman2014nearly}.
The time complexity of each SDD solve is nearly linear in the number of nonzero entries of the SDD system, which is $O\bigl(\vol(S)\bigr) = O(1 / \eps)$ in our case.
As the number of iterations is also $O(1 / \eps)$, we obtain a total running time of $\tO(1 / \eps^2)$.

In our analysis, we give algebraic proofs to show that the behavior of our algorithm matches the intuitive properties of the infinite-push process, and also show that the approximation error of the SDD solvers does not affect the correctness of the algorithm.

For the $\ell_1$-regularized PageRank problem, we use the same active-set algorithm with different parameter settings, and we give the error bound for the objective function as well as the complexity bound in terms of the support of the exact minimizer of the problem.
A key step is to show that the active set is always contained in $\Sast$, the support of the exact minimizer.
Thus, the number of iterations is at most $|\Sast|$ and the involved SDD systems have $O\bigl(\vol(\Sast)\bigr)$ nonzero entries.
This yields the desired complexity bound of $\tO\bigl(|\Sast|\vol(\Sast)\bigr)$.

Our active-set method is partly inspired by the optimization method in \cite{rubio2023accelerated} for $\ell_1$-regularized PageRank, which also uses an active-set strategy and solves subproblems with nonnegative constraints on each active set.
Compared to their method, our method has a combinatorial interpretation, uses significantly fewer optimization tools, and does not consider subproblems with nonnegative constraints.

\section{Other Related Work} \label{sec:related_work}

\paragraph{Evolving-set methods.}
Andersen, Oveis Gharan, Peres, and Trevisan~\cite{andersen2016almost} developed a different local clustering approach based on simulating a volume-biased evolving-set process.
For any $\eta > 0$ and any node set $A$ that has conductance at most $\phi$, for at least half of the seed nodes in $A$, their algorithm outputs a set of conductance $O\bigl(\sqrt{\phi / \eta}\bigr)$ with constant probability.
The expected ratio between the work of their algorithm and the volume of its output is $\vol(A)^{\eta} \, \phi^{-1/2} \polylog n$.
This improves the dependence on $\phi$ in the work-volume ratio over \PageRankNibble from $\phi^{-1}$ to $\phi^{-1/2}$, and achieves an almost Cheeger-optimal conductance guarantee.
In contrast, our result gives a different tradeoff within the PageRank-based framework and reduces the dependence on $\phi^{-1}$ to polylogarithmic.

\paragraph{PageRank computation.}
PageRank was originally proposed by Google as a ranking method for the web~\cite{brin1998anatomy} and has since been used broadly in network analysis and beyond~\cite{gleich2015pagerank}.
A large literature studies local estimation of PageRank values and variants; see the survey of \cite{yang2024efficient} and recent results such as \cite{wang2024revisiting,wang2024revisitinga,thorup2026pagerank}.
Specifically, \cite{wei2024approximating} studies the related problem of computing PageRank from a source node under absolute and degree-normalized absolute error guarantees.
However, their running times have a linear dependence on $1 / \alpha$, and also their error guarantee does not meet the requirement in the ACL sense, so it cannot be substituted directly into the local-clustering theorem.

\paragraph{SDD solvers.}
Nearly-linear-time solvers for SDD systems originate in the work of Spielman and Teng~\cite{spielman2004nearly} and have since then been refined in a long line of work.
Subsequent results have substantially simplified the machinery, including combinatorial solvers based on cycle updates~\cite{kelner2013simple} and simple solvers based on sparse approximate Gaussian elimination~\cite{kyng2016approximate}.
We use the solvers only as a black box and do not rely on a particular construction.
There are also works that study sublinear-time solvers for SDD systems and beyond~\cite{andoni2019solving,kwok2026solving}, but their running times depend polynomially on an appropriate condition parameter, which would introduce a polynomial dependence on $1 / \alpha$ for the systems encountered in our problems.

Besides, a recent work~\cite{fountoulakis2026complexity} studies the complexity of classic accelerated proximal-gradient methods for $\ell_1$-regularized PageRank.
It proves some hardness results for such algorithms and also new upper bounds under certain assumptions on the graph structure.
The scope of our work is different from theirs in that we consider using general algorithms and do not make additional assumptions on the graph structure.

\section{The Active-Set Algorithm} \label{sec:active-set}

In this section, we present our active-set algorithm for computing approximate PageRank vectors and prove \cref{thm:ACL}.
This algorithm will also be applied for the $\ell_1$-regularized PageRank problem in \cref{sec:regularized}.

\paragraph{Notation.}
Throughout the paper, for any $\x \in \R^V$ and $S \subseteq V$, we use both $\x_S \in \R^S$ and $\x|_S \in \R^S$ to denote the restriction of $\x$ to the coordinates in $S$; when we use $\le$ and $\ge$ between two vectors, we mean entrywise comparison.
For $S \subseteq V$, let matrices $\A_S$ and $\D_S$ be the principal submatrices of $\A$ and $\D$ indexed by $S$, respectively; also define $\L_{\alpha} := \D - (1 - \alpha) \A$ and $\L_{\alpha,S} := \D_S - (1 - \alpha) \A_S$.
Note that the diagonal entries of $\L_{\alpha,S}$ are the node degrees in the original graph.
We call a square matrix \textit{symmetric diagonally dominant} (SDD) if it is symmetric and each of its diagonal entry is at least the sum of the absolute values of the other entries in the same row.
It follows that $\L_{\alpha,S}$ is SDD for any $S \subseteq V$.

\subsection{Algorithm Description}

To design an implementable algorithm, we need to replace the conceptual infinite-push process in \cref{sec:overview} with a finite procedure.
In the limiting state, every node in the active set $S$ has residue exactly $0.5\eps \, \d(v)$, so this boils down to expressing the reserve vector associated with a fixed residue vector.
To this end, we derive the relationship between a degree-normalized reserve vector $\x = \D^{-1}\p$ and its associated residue vector $\r(\x)$ that satisfies the invariant $\D\x = \pr_{\alpha}\bigl(\e_s-\r(\x)\bigr)$.
By the definition of $\pr_{\alpha}(\cdot)$ in \cref{eqn:PageRank}, $\D\x = \pr_{\alpha}\bigl(\e_s-\r(\x)\bigr)$ is equivalent to
\begin{align*}
	\D\x = \alpha(\e_s - \r(\x)) + (1 - \alpha) \A\x,
\end{align*}
which is also equivalent to
\begin{align}
	\L_{\alpha} \, \x = \alpha \bigl(\e_s - \r(\x)\bigr) \label{eqn:residue-invariant}
\end{align}
since we write $\L_{\alpha} := \D - (1 - \alpha) \A$.
In light of this, for any $\x \in \R^V$, we define
\begin{align}
	\r(\x) := \e_s - \frac{1}{\alpha} \, \L_{\alpha} \, \x \label{eqn:residue-def}
\end{align}
so that \cref{eqn:residue-invariant} holds by definition.
Now, given a fixed residue vector $\r(\x)$, the degree-normalized reserve vector $\x$ can be recovered by solving the linear system \eqref{eqn:residue-invariant}.
Furthermore, if we are given $\r \in \R^S$ and want to compute a vector $\x \in \R^V$ such that $\r(\x)_S = \r$, we can solve the restricted linear system $\L_{\alpha,S} \, \x_S = \alpha \bigl(\e_s|_S - \r\bigr)$ and set $\x_{V\setminus S} := \vzero$.
Here, $\L_{\alpha,S}$ is SDD and invertible.

The definition of $\r(\x)$ also gives the following mass identity property.
Since $\vone^{\top} \A = \vone^{\top} \D$, we have $\vone^{\top} \L_{\alpha} = \vone^{\top} \bigl(\D - (1 - \alpha) \A\bigr) = \alpha \, \vone^{\top} \D$.
For any $\x \in \R^V$, multiplying \cref{eqn:residue-def} by $\vone^{\top}$ gives
\begin{align}
	\vone^{\top} \D \x + \vone^{\top} \r(\x) = 1. \label{eqn:mass-identity}
\end{align}
Additionally, by inspecting \cref{eqn:residue-def} entrywise, we have that, if $\supp(\x) \subseteq S$ and $s \in S$, then
\begin{align}
	\r(\x)(v) = \frac{1 - \alpha}{\alpha} \sum_{u \in N(v) \cap S} \x(u), \qquad \forall \, v \in V \setminus S. \label{eqn:frontier-residue-formula}
\end{align}
In particular, $\r(\x)(v) = 0$ for every $v \notin S \cup \partial S$.

In our algorithm, we use nearly-linear-time approximate solvers for SDD linear systems as a black box.
The solvers' accuracy guarantee is given in terms of the following \textit{energy norm}: for a vector $\x$ and a symmetric positive definite matrix $\S$, define $\|\x\|_{\S} := \sqrt{\x^{\top} \S \x} = \bigl\|\S^{1/2}\x\bigr\|_2$.

\begin{theorem}[SDD solver~\cite{spielman2004nearly,jambulapati2025ultrasparse}] \label{thm:SDD_solver}
	There is a randomized algorithm that, given a positive definite SDD matrix $\S \in \R^{N \times N}$, a vector $\b \in \R^N$, an accuracy parameter $\mu \in (0,1)$, and a failure probability $\delta \in (0,1)$, with probability at least $1 - \delta$ returns a vector $\tx \in \R^N$ such that
	\begin{align}
		\bigl\| \tx - \S^{-1}\b \bigr\|_{\S} \le \mu \bigl\| \S^{-1}\b \bigr\|_{\S} \label{eqn:solver-guarantee}
	\end{align}
	in time $O\bigl( \nnz(\S) \polylog\bigl(N / (\mu\delta)\bigr) \bigr)$.
\end{theorem}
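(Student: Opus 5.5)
This statement is imported from the literature rather than proved from scratch, so the plan is to reduce it to the Laplacian-solver theorems of \cite{spielman2004nearly,jambulapati2025ultrasparse} and check that their guarantee has exactly the form \eqref{eqn:solver-guarantee}, with the claimed dependence on $\delta$.

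\textbf{Step 1: reduce SDD to Laplacian.} I would first reduce a positive definite SDD matrix $\S$ to a graph Laplacian using the standard Gremban-type construction. Write $\S = \S_{+} + \S_{-} + \mat{E}$, where $\S_{+}$ collects the positive off-diagonal entries, $\S_{-}$ the negative ones, and $\mat{E}$ is the nonnegative diagonal excess. Form the $2N\times 2N$ Laplacian of a graph on two copies of the vertex set together with one auxiliary vertex that absorbs the diagonal excess. Solving this lifted system with right-hand side $(\b,-\b,0)$ and taking half the difference of the two copies recovers $\S^{-1}\b$. The energy-norm error transfers with at most a constant factor, since the lifting is an isometry up to a factor of $2$ in the relevant quadratic forms. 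The lifted matrix has $O(\nnz(\S))$ nonzeros, and none of our matrices $\L_{\alpha,S}$ need anything beyond this.

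\textbf{Step 2: solve the Laplacian system.} For the Laplacian system I would invoke the cited nearly-linear-time solver. It builds a preconditioner, either an ultrasparsifier from a low-stretch spanning tree plus sampled edges, or a sparse approximate Cholesky factor, and runs preconditioned Chebyshev or conjugate-gradient iterations recursively. Each level reduces the energy-norm error by a constant factor in $\tO(\nnz)$ work. Running $O(\log(1/\mu))$ outer iterations gives relative error $\mu$ in the $\S$-norm, which is exactly \eqref{eqn:solver-guarantee}. The energy norm is the natural quantity here because Chebyshev and CG error bounds are stated in it.

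\textbf{Step 3: obtain failure probability $\delta$.} The preconditioner construction is randomized and succeeds with constant probability, or with probability $1-1/\mathrm{poly}(N)$. To reach an arbitrary $\delta$, I would do the sampling with $O(\log(1/\delta))$-fold oversampling, or with concentration parameters set at $\log(N/\delta)$. This only multiplies the running time by $\polylog(N/\delta)$, so the total is $O(\nnz(\S)\polylog(N/(\mu\delta)))$.

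\textbf{Main obstacle.} The delicate point is Step 3, because the output of a single run cannot be verified directly. The plan is therefore to boost the success probability of the preconditioner quality, checked through the spectral approximation guarantee, rather than to repeat runs and test outputs. This is precisely what the cited works provide, so in the paper I would simply cite them rather than reprove the construction.
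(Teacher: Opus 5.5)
Your proposal takes the same approach as the paper, which also imports this result as a black box from \cite{spielman2004nearly,jambulapati2025ultrasparse} without proving it, and your extra sketch is consistent with that literature: the Gremban lifting with right-hand side $(\b,-\b,0)$ transfers the relative $\S$-norm error exactly, and boosting the preconditioner's success probability to $1-\delta$ costs only $\polylog(N/\delta)$. One small slip and one simplification: with the auxiliary vertex the lifted Laplacian is $(2N+1)\times(2N+1)$, not $2N\times 2N$, and for the matrices $\L_{\alpha,S}$ used here the doubling step is unnecessary because their off-diagonal entries are already nonpositive.
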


Note that the term $\polylog\bigl(N / (\mu\delta)\bigr)$ in the complexity bound above can be small (e.g., using \cite{jambulapati2025ultrasparse}), but we leave it implicit since it is not central to our results.

We now present the approximate active-set algorithm in \cref{alg:active-set}.
We write it as a unified framework that can be used for both ACL approximate PageRank and $\ell_1$-regularized PageRank.
Specifically, we parameterize the active-set algorithm by an internal residue level $\lambda$ and an activation gap $\kappa$.
The target residue on the active set is $\lambda \d$, while the algorithm expands at the threshold $(\lambda + \kappa) \d$, and we also set the parameters for the SDD solvers according to these parameters.
For the task of computing an $\eps$-approximate PageRank vector, we set $\lambda := 0.5\eps$ and $\kappa := 0.5\eps$, in which case the algorithm framework exactly matches the conceptual one in \cref{sec:overview}.

\begin{algorithm}[ht]
	\DontPrintSemicolon
	\caption{$\activeSet(s,\alpha,\lambda,\kappa,\delta)$} \label{alg:active-set}
	\KwIn{seed node $s \in V$, teleportation parameter $\alpha \in (0,1)$, internal residue level $\lambda > 0$, activation gap $\kappa > 0$, failure probability $\delta \in (0,1)$}
	\KwOut{a degree-normalized vector $\tx$ and an approximate PageRank vector $\tp = \D\tx$}
	\If{$(\lambda + \kappa) \, \d(s) \ge 1$}
	{
		\Return{$\tx \gets \vzero$ \textup{and} $\tp \gets \vzero$} \;
	}
	$S \gets \{s\}$ \;
	\While{\textup{true}}
	{
		invoke the SDD solver of \cref{thm:SDD_solver} on $\L_{\alpha,S} \, \x_S = \alpha(\e_s|_S - \lambda \, \d_S)$ with parameters $\bigl(0.01\alpha \, \min\{\lambda,\kappa\},\lambda\delta\bigr)$, obtaining an approximate solution $\tx_S$ \;
		$\tr(v) \gets \frac{1-\alpha}{\alpha}\sum_{u \in N(v)\cap S}\tx_S(u)$ for each $v \in \partial S$ \;
		$T \gets \bigl\{v \in \partial S: \tr(v) > (\lambda + \kappa) \, \d(v)\bigr\}$ \;
		\If{$T = \varnothing$}
		{
			extend $\tx_S$ to $\tx \in \R^V$ by setting $\tx_{V\setminus S} \gets \vzero$ \;
			\Return{$\tx$ \textup{and} $\tp \gets \D\tx$} \;
		}
		$S \gets S \cup T$ \;
	}
\end{algorithm}

\subsection{Correctness Analysis}

We first introduce the notations for the exact solutions to the restricted linear systems that arise in the active-set algorithm.
For an active set $S$ containing $s$, define $\x^{(S)} \in \R^V$ by
\begin{align}
	\x^{(S)}_S := \alpha \, \L_{\alpha,S}^{-1} \bigl(\e_s|_S - \lambda \, \d_S\bigr), \qquad \x^{(S)}_{V\setminus S} := \vzero. \label{eqn:exact-solution}
\end{align}
Plugging \cref{eqn:exact-solution} into \cref{eqn:residue-def} verifies that $\r\bigl(\x^{(S)}\bigr)_S = \lambda \, \d_S$.

The following lemma formalizes the monotonicity property suggested by the infinite-push interpretation in \cref{sec:overview}.

\begin{lemma} \label[lemma]{lem:active-set-expansion}
	Suppose $s \in S \subseteq V$ and $T \subseteq \partial S$ satisfies $\r\bigl(\x^{(S)}\bigr)_T > \lambda \, \d_T$.
	Then $\x^{(S \cup T)} \ge \x^{(S)}$ and
	\begin{align}
		\x^{(S \cup T)}(v) \ge \frac{\alpha}{\d(v)}\bigl(\r\bigl(\x^{(S)}\bigr)(v) - \lambda \, \d(v)\bigr), \qquad \forall \, v \in T. \label{eqn:active-set-expansion}
	\end{align}
\end{lemma}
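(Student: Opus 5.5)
Write $S' := S \cup T$, $\x := \x^{(S)}$, $\x' := \x^{(S')}$, and $\y := \x' - \x$. Since $\supp(\x) \subseteq S \subseteq S'$ and $\supp(\x') \subseteq S'$, the difference $\y$ is supported on $S'$. The plan is to show that $\y_{S'}$ solves an SDD system with a nonnegative right-hand side, and then use that the inverse of $\L_{\alpha,S'}$ is entrywise nonnegative.

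\textbf{Step 1: the system satisfied by $\y$.} By linearity of $\r(\cdot)$ in \cref{eqn:residue-def},
\begin{align*}
	\frac{1}{\alpha}\,\L_{\alpha}\,\y = \r(\x) - \r(\x').
\end{align*}
Restrict this to $S'$. Because $\y$ vanishes outside $S'$, the left side becomes $\frac{1}{\alpha}\L_{\alpha,S'}\,\y_{S'}$. On the right side, $\r(\x')_{S'} = \lambda\,\d_{S'}$ by the remark after \cref{eqn:exact-solution}. Also $\r(\x)_S = \lambda\,\d_S$. Hence
\begin{align*}
	\L_{\alpha,S'}\,\y_{S'} = \alpha\,\b, \qquad \b(v) := \begin{cases} 0, & v \in S,\\ \r(\x)(v) - \lambda\,\d(v), & v \in T. \end{cases}
\end{align*}
The hypothesis gives $\b \ge \vzero$, with $\b_T > \vzero$.

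\textbf{Step 2: inverse positivity.} I would show $\L_{\alpha,S'}^{-1} \ge \mzero$ entrywise. Write $\L_{\alpha,S'} = \D_{S'}\bigl(\mat{I} - (1-\alpha)\D_{S'}^{-1}\A_{S'}\bigr)$. The matrix $\mat{B} := (1-\alpha)\D_{S'}^{-1}\A_{S'}$ is nonnegative, and its row sums are at most $1-\alpha < 1$ because the degrees in $\D_{S'}$ are those of the whole graph. So the Neumann series gives
\begin{align*}
	\L_{\alpha,S'}^{-1} = \Bigl(\sum_{k \ge 0} \mat{B}^k\Bigr)\D_{S'}^{-1} \ge \mzero.
\end{align*}
The $k=0$ term is $\D_{S'}^{-1}$, so the diagonal satisfies $\bigl(\L_{\alpha,S'}^{-1}\bigr)_{vv} \ge 1/\d(v)$.

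\textbf{Step 3: conclude.} From Steps 1 and 2, $\y_{S'} = \alpha\,\L_{\alpha,S'}^{-1}\b \ge \vzero$, which gives $\x' \ge \x$ everywhere (outside $S'$ both vectors are zero).

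For $v \in T$ we have $\x(v) = 0$, so $\x'(v) = \y(v)$. Since every entry of $\L_{\alpha,S'}^{-1}$ and of $\b$ is nonnegative, we may keep only the diagonal term:
\begin{align*}
	\y(v) \ge \alpha\,\bigl(\L_{\alpha,S'}^{-1}\bigr)_{vv}\,\b(v) \ge \frac{\alpha}{\d(v)}\bigl(\r(\x)(v) - \lambda\,\d(v)\bigr).
\end{align*}
This is exactly \eqref{eqn:active-set-expansion}.

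\textbf{Main obstacle.} The only delicate point is the bookkeeping in Step 1. One has to check that restricting $\L_{\alpha}\y$ to $S'$ coincides with $\L_{\alpha,S'}\y_{S'}$, which holds because $\supp(\y) \subseteq S'$. One also has to check that the $\e_s$ terms cancel, which holds because $s \in S$ for both systems. Step 2 is standard, and it relies on $\alpha > 0$ to make the Neumann series converge.
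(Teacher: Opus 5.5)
Your proof is correct and follows the paper's argument. Both proofs form the difference system $\L_{\alpha,S\cup T}\,\y_{S\cup T} = \alpha\,\b$, whose right-hand side is nonnegative, and both use entrywise nonnegativity of $\L_{\alpha,S\cup T}^{-1}$. The differences are minor. You prove inverse positivity directly via the Neumann series, where the paper cites the nonsingular $M$-matrix property. You also obtain \eqref{eqn:active-set-expansion} from the diagonal bound $(\L_{\alpha,S\cup T}^{-1})_{vv} \ge 1/\d(v)$, whereas the paper reads off row $v$ of the system and drops the neighbor terms using $\y \ge \vzero$.
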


The property \eqref{eqn:active-set-expansion} in this lemma also has an intuitive interpretation: in the infinite-push process, if the active set is expanded to include $T$, then for any node $v \in T$, at least one push operation would be performed on $v$, the first of which would increment its reserve by at least $\alpha \bigl(\r\bigl(\x^{(S)}\bigr)(v) - \lambda \, \d(v)\bigr)$.

\begin{proof}[Proof of \cref{lem:active-set-expansion}]
	Since both $\x^{(S)}$ and $\x^{(S \cup T)}$ are zero outside $S \cup T$, the relationship between $\x$ and $\r(\x)$ in \cref{eqn:residue-invariant} gives
	\begin{align}
		\L_{\alpha,S \cup T}\bigl(\x^{(S \cup T)} - \x^{(S)}\bigr)_{S \cup T} = \alpha \bigl(\r\bigl(\x^{(S)}\bigr) - \r\bigl(\x^{(S \cup T)}\bigr)\bigr)_{S \cup T}. \label{eqn:active-set-expansion-system}
	\end{align}
	We have $\r\bigl(\x^{(S)}\bigr)_S = \r\bigl(\x^{(S \cup T)}\bigr)_S = \lambda \, \d_S$.
	For $v \in T$, we have $\r\bigl(\x^{(S \cup T)}\bigr)(v) = \lambda \, \d(v)$, whereas the assumption gives $\r\bigl(\x^{(S)}\bigr)(v) > \lambda \, \d(v)$.
	Hence the right-hand side of \cref{eqn:active-set-expansion-system} is nonnegative.
	$\L_{\alpha,S \cup T}$ is symmetric positive definite with nonpositive off-diagonal entries, so it is a nonsingular $M$-matrix and $\L_{\alpha,S \cup T}^{-1}$ is entrywise nonnegative.
	So multiplying \cref{eqn:active-set-expansion-system} by $\L_{\alpha,S \cup T}^{-1} \ge \mzero$ proves $\x^{(S \cup T)} \ge \x^{(S)}$.

	For $v \in T$, using $\L_{\alpha} = \D - (1 - \alpha)\A$, the row indexed by $v$ in \eqref{eqn:active-set-expansion-system} gives
	\begin{align*}
		& \phantom{{}={}} \d(v) \bigl(\x^{(S \cup T)}(v) - \x^{(S)}(v)\bigr) - (1 - \alpha)\sum_{u \in N(v)\cap(S \cup T)}\bigl(\x^{(S \cup T)}(u) - \x^{(S)}(u)\bigr) \\
		& = \alpha \bigl(\r\bigl(\x^{(S)}\bigr)(v) - \lambda \, \d(v)\bigr).
	\end{align*}
	Since $\x^{(S)}(v) = 0$ and $\x^{(S \cup T)} \ge \x^{(S)}$ by the first claim, it follows that
	\begin{align*}
		\d(v) \, \x^{(S \cup T)}(v) \ge \alpha \bigl(\r\bigl(\x^{(S)}\bigr)(v) - \lambda \, \d(v)\bigr).
	\end{align*}
	Dividing by $\d(v)$ proves the second claim.
\end{proof}

The next lemma shows that the accuracy of the SDD solvers guarantees that $\tx_S$ and $\tr$ are close to their exact counterparts $\x^{(S)}_S$ and $\r\bigl(\x^{(S)}\bigr)$, respectively.

\begin{lemma} \label[lemma]{lem:active-set-solver-error}
	Suppose that $(\lambda + \kappa) \, \d(s) < 1$ and in an iteration of \cref{alg:active-set}, the set $S$ satisfies $\x^{(S)} \ge \vzero$.
	If the SDD solve in this iteration returns a $\tx_S$ that satisfies the accuracy guarantee \eqref{eqn:solver-guarantee} in \cref{thm:SDD_solver}, then, extending $\tx_S$ by zero outside $S$ to obtain $\tx \in \R^V$ and letting $\tr := \r(\tx)$, we have
	\begin{alignat}{2}
		\bigl|\tr(v) - \lambda \, \d(v)\bigr| & \le 0.1\min\{\lambda,\kappa\} \, \d(v), & \qquad & \forall \, v \in S, \label{eqn:active-set-active-residue-error} \\
		\bigl|\tx_S(v) - \x^{(S)}(v)\bigr| & \le 0.1 \alpha \kappa, & \qquad & \forall \, v \in S, \label{eqn:active-set-coordinate-error} \\
		\bigl|\tr(v) - \r\bigl(\x^{(S)}\bigr)(v)\bigr| & \le 0.1 \kappa \, \d(v), & \qquad & \forall \, v \in \partial S. \label{eqn:active-set-frontier-error}
	\end{alignat}
\end{lemma}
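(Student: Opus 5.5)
The plan is to control the error vector $\mathbf{e} := \tx_S - \x^{(S)}_S \in \R^S$ entrywise, and then read off all three bounds from the explicit row formulas for $\r(\cdot)$. Write $\mu := 0.01\alpha\min\{\lambda,\kappa\}$ for the solver accuracy used in \cref{alg:active-set}. The system passed to the solver is $\L_{\alpha,S}\,\x_S = \alpha(\e_s|_S - \lambda\,\d_S)$, and its exact solution is $\x^{(S)}_S$ by \eqref{eqn:exact-solution}. So \eqref{eqn:solver-guarantee} reads $\|\mathbf{e}\|_{\L_{\alpha,S}} \le \mu\,\|\x^{(S)}_S\|_{\L_{\alpha,S}}$. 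The argument has three steps: (i) bound the energy norm of the exact solution, (ii) convert the energy-norm error into an $\ell_\infty$ error, and (iii) propagate it to the residues.

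\textbf{Step (i): the energy norm of $\x^{(S)}_S$.} I expect this to be the only real obstacle, and it is where the hypothesis $\x^{(S)} \ge \vzero$ enters. From the system we get
\begin{align*}
\|\x^{(S)}_S\|_{\L_{\alpha,S}}^2 = \alpha\,\x^{(S)}_S{}^{\top}(\e_s|_S - \lambda\,\d_S) \le \alpha\,\x^{(S)}(s),
\end{align*}
where the inequality uses $\x^{(S)} \ge \vzero$ and $\lambda\,\d_S \ge \vzero$. To bound $\x^{(S)}(s)$ I will use the mass identity \eqref{eqn:mass-identity} for $\x^{(S)}$. The residue $\r(\x^{(S)})$ is nonnegative everywhere: it equals $\lambda\,\d$ on $S$, it is given by the nonnegative sum \eqref{eqn:frontier-residue-formula} on $\partial S$, and it is $0$ elsewhere. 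Hence $\vone^{\top}\D\x^{(S)} \le 1$, and since all terms are nonnegative, $\d(s)\,\x^{(S)}(s) \le 1$. This gives $\x^{(S)}(s) \le 1$ and therefore $\|\x^{(S)}_S\|_{\L_{\alpha,S}} \le \sqrt{\alpha}$.

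\textbf{Step (ii): the coordinate bound \eqref{eqn:active-set-coordinate-error}.} Note that $\L_\alpha = \alpha\D + (1-\alpha)(\D - \A) \succeq \alpha\D \succeq \alpha\mathbf{I}$, because the graph Laplacian is PSD and all degrees are at least $1$. Principal submatrices inherit this, so $\L_{\alpha,S} \succeq \alpha\mathbf{I}$. Then
\begin{align*}
\|\mathbf{e}\|_\infty \le \|\mathbf{e}\|_2 \le \alpha^{-1/2}\|\mathbf{e}\|_{\L_{\alpha,S}} \le \alpha^{-1/2}\mu\sqrt{\alpha} = \mu .
\end{align*}
Since $\mu = 0.01\alpha\min\{\lambda,\kappa\} \le 0.1\alpha\kappa$, this proves \eqref{eqn:active-set-coordinate-error}.

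\textbf{Step (iii): the residue bounds.} Both $\tx$ and $\x^{(S)}$ vanish outside $S$, so by \eqref{eqn:residue-def} we have $\tr - \r(\x^{(S)}) = -\frac{1}{\alpha}\L_\alpha(\tx - \x^{(S)})$.
\begin{itemize}
\item For $v \in S$, use $\r(\x^{(S)})(v) = \lambda\,\d(v)$ and the row formula $\d(v)\mathbf{e}(v) - (1-\alpha)\sum_{u \in N(v)\cap S}\mathbf{e}(u)$. This gives $|\tr(v) - \lambda\,\d(v)| \le \frac{2\d(v)\mu}{\alpha} = 0.02\min\{\lambda,\kappa\}\,\d(v)$, which is \eqref{eqn:active-set-active-residue-error}.
\item For $v \in \partial S$, formula \eqref{eqn:frontier-residue-formula} applies to both vectors. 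Hence $|\tr(v) - \r(\x^{(S)})(v)| \le \frac{1-\alpha}{\alpha}\,|N(v)\cap S|\,\mu \le \frac{\d(v)\mu}{\alpha} \le 0.01\kappa\,\d(v)$, which is \eqref{eqn:active-set-frontier-error}.
\end{itemize}

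The hypothesis $(\lambda+\kappa)\,\d(s) < 1$ only ensures that we are in the main loop of \cref{alg:active-set}. The constants leave ample slack, so the crude $\ell_\infty$ conversion in step (ii) suffices.
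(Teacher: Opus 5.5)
Your proof is correct, but it takes a genuinely different route from the paper's at two points.

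First, you bound the energy norm of the exact solution by $\|\x^{(S)}_S\|_{\L_{\alpha,S}}^2 = \alpha\,(\x^{(S)}_S)^{\top}(\e_s|_S - \lambda\,\d_S) \le \alpha\,\x^{(S)}(s) \le \alpha/\d(s)$, where $\d(s)\,\x^{(S)}(s) \le 1$ comes from the mass identity. The paper instead combines $\L_{\alpha,S}^{-1} \preceq \alpha^{-1}\D_S^{-1}$ with the volume bound $\vol(S) \le 1/\lambda$ to get $2\alpha$. That step needs $\lambda < 1$, and it is where the paper actually uses $(\lambda+\kappa)\,\d(s) < 1$; in your argument that hypothesis is indeed idle, as you note.

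Second, you convert the energy-norm error once into the uniform bound $\|\tx_S - \x^{(S)}_S\|_\infty \le \mu$ via $\L_{\alpha,S} \succeq \alpha\D_S \succeq \alpha\mathbf{I}$. You then get all three estimates from the triangle inequality on explicit row formulas. The paper instead runs a separate Cauchy--Schwarz argument for each estimate:
\begin{itemize}
\item $\|\e_v\|_{\L_{\alpha,S}} = \sqrt{\d(v)}$ for the active residues;
\item $\|\e_v\|_{\L_{\alpha,S}^{-1}} \le (\alpha\,\d(v))^{-1/2}$ for the coordinates;
\item a degree-weighted Cauchy--Schwarz through $\|\cdot\|_{\D_S}$ for the frontier.
\end{itemize}

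The paper's route gives degree-adaptive bounds that are sharper than the lemma requires: active-residue error at most $0.02\sqrt{\alpha\,\d(v)}\,\min\{\lambda,\kappa\}$, coordinate error at most $0.02\,\alpha\min\{\lambda,\kappa\}/\sqrt{\d(v)}$, and frontier error at most $0.02\min\{\lambda,\kappa\}\sqrt{\d(v)}$. Yours gives up these $\sqrt{\d(v)}$ and $\sqrt{\alpha}$ refinements. The loss is absorbed because the targets already scale like $\d(v)$ and $\mu$ already carries a factor $\alpha$. You even finish with better constants ($0.02$ and $0.01$ in place of $0.1$), using only elementary inequalities.
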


The proof of \cref{lem:active-set-solver-error} is deferred to the next subsection.
Here, we establish the properties of \cref{alg:active-set} using \cref{lem:active-set-solver-error}.

\begin{lemma} \label[lemma]{lem:active-set}
	With probability at least $1 - \delta$, \cref{alg:active-set} returns a $(\lambda + \kappa)$-approximate PageRank vector $\tp \ge \vzero$ with $\vol\bigl(\supp(\tp)\bigr) \le 1 / \lambda$, and it runs in time
	\begin{align*}
		O\left( \frac{1}{\lambda^2} \polylog\frac{1}{\alpha\lambda\kappa\delta} \right).
	\end{align*}
\end{lemma}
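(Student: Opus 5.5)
We condition on the event that every SDD solve invoked by \cref{alg:active-set} satisfies \eqref{eqn:solver-guarantee}. The trivial case $(\lambda+\kappa)\d(s)\ge 1$ is handled directly: $\tp=\vzero$ corresponds to $\r=\e_s\le(\lambda+\kappa)\d$, which is a valid approximation with empty support. Otherwise, we first bound the number of iterations, and hence the number of solves, by $1/\lambda+1$. The per-call failure probability $\lambda\delta$ then gives total failure probability at most about $\delta$ by a union bound; we adjust constants, e.g.\ use $\lambda\delta/2$, if needed.

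The core of the argument is an induction over iterations with the invariant that the current $S$ satisfies $s\in S$, $\x^{(S)}\ge\vzero$, and $\vol(S)\le 1/\lambda$. Initially $S=\{s\}$, and $\x^{(S)}(s)=\alpha(1-\lambda\d(s))/\d(s)>0$. Suppose the iteration expands $S$ to $S\cup T$. Each $v\in T$ has $\tr(v)>(\lambda+\kappa)\d(v)$, so by \eqref{eqn:active-set-frontier-error} we get $\r(\x^{(S)})(v)>(\lambda+0.9\kappa)\d(v)>\lambda\d(v)$. Hence \cref{lem:active-set-expansion} applies and gives $\x^{(S\cup T)}\ge\x^{(S)}\ge\vzero$.

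To bound the volume, we apply the mass identity \eqref{eqn:mass-identity} to $\x^{(S')}$ for $S'=S\cup T$. Since $\x^{(S')}\ge\vzero$, we have $\r(\x^{(S')})\ge 0$ outside $S'$ by \eqref{eqn:frontier-residue-formula}, and $\r(\x^{(S')})=\lambda\d$ on $S'$. Therefore $\lambda\vol(S')\le 1-\vone^{\top}\D\x^{(S')}\le 1$. Because $T\neq\varnothing$, each expansion adds at least one node, so the number of iterations is at most $|S|\le\vol(S)\le 1/\lambda$, plus one final iteration.

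At termination, correctness follows from the residues. Let $\tx$ be the zero-extended output and $\tr=\r(\tx)$. Then $\tp=\D\tx=\pr_\alpha(\e_s-\tr)$ by \eqref{eqn:residue-invariant}, so it remains to show $\vzero\le\tr\le(\lambda+\kappa)\d$ and $\tp\ge\vzero$. On $S$, \eqref{eqn:active-set-active-residue-error} gives $\tr(v)\in[0.9\lambda\d(v),1.1\lambda\d(v)]$, and $1.1\lambda\le\lambda+\kappa$ since $0.1\min\{\lambda,\kappa\}\le\kappa$. On $\partial S$, the condition $T=\varnothing$ gives the upper bound. Outside $S\cup\partial S$, we have $\tr=0$. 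For nonnegativity on $\partial S$ we need $\tx\ge\vzero$. Here we must be careful, because $\tx$ could be slightly negative where $\x^{(S)}$ is tiny.

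This nonnegativity is the main obstacle. Our first attempt is a lower bound $\x^{(S)}(v)\ge 0.1\alpha\kappa$ for all $v\in S$, which combined with \eqref{eqn:active-set-coordinate-error} gives $\tx\ge\vzero$. For $v=s$ we would use $\x^{(S)}(s)\ge\alpha(1-\lambda\d(s))/\d(s)$, which follows from monotonicity, but relating this to $0.1\alpha\kappa$ needs the hypothesis $(\lambda+\kappa)\d(s)<1$. Indeed, if $\d(s)\le 1/(2\kappa)$ it is fine; otherwise we use $1-\lambda\d(s)>\kappa\d(s)$. For $v$ added in some $T$, \eqref{eqn:active-set-expansion} with $\r(\x^{(S)})(v)-\lambda\d(v)>0.9\kappa\d(v)$ gives $\x^{(S\cup T)}(v)\ge 0.9\alpha\kappa$, and this persists under later expansions by monotonicity. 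So $\tx\ge 0.8\alpha\kappa>0$ on $S$, giving $\tp\ge\vzero$ and $\tr\ge0$ on $\partial S$ via \eqref{eqn:frontier-residue-formula}. Finally, $\supp(\tp)\subseteq S$ gives the volume bound $1/\lambda$.

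For the running time, each iteration's solve costs $O(\nnz(\L_{\alpha,S})\polylog(\cdot))=O(\vol(S)\polylog\frac{1}{\alpha\lambda\kappa\delta})$, using $|S|\le 1/\lambda$ and $\mu=0.01\alpha\min\{\lambda,\kappa\}$. Computing $\tr$ on $\partial S$ and $T$ takes $O(\vol(S))$ time by scanning adjacency lists. Summing over at most $1/\lambda+1$ iterations gives the claimed bound.
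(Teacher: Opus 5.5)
Your proof follows the paper's argument. It uses the same induction over active sets driven by \cref{lem:active-set-solver-error} and \cref{lem:active-set-expansion}, the mass identity \eqref{eqn:mass-identity} for $\vol(S)\le 1/\lambda$, and the lower bound $\x^{(S)}(v) > 0.9\alpha\kappa$ combined with \eqref{eqn:active-set-coordinate-error} to get $\tx\ge\vzero$. The paper puts this quantitative bound directly into the induction hypothesis, whereas you first carry $\x^{(S)}\ge\vzero$ and recover the bound afterwards.

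The one thing to fix is your iteration count, which is off by one. It is the only reason you propose running the solver with $\lambda\delta/2$, and that would prove the lemma for a modified algorithm, not for \cref{alg:active-set} as stated. Since $S$ starts as $\{s\}$ and each of the $k-1$ nonterminal iterations adds at least one node, $k$ iterations give $|S'|\ge k$. So there are at most $|S'|\le\vol(S')\le 1/\lambda$ iterations in total, i.e., at most $\lfloor 1/\lambda\rfloor$ solver calls, and the failure probability is at most $\lfloor 1/\lambda\rfloor\cdot\lambda\delta\le\delta$ with the algorithm's own parameter. This bound on the number of calls only holds when the earlier calls succeeded, so phrase the union bound as the paper does. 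Sum over $i$ the probability that the \emph{first} failure happens at call $i$; this is zero for $i>\lfloor 1/\lambda\rfloor$, because $\lfloor 1/\lambda\rfloor$ successful calls already force termination.

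There are two minor slips. First, on $S$ the upper bound should be stated as $\tr(v)\le\bigl(\lambda+0.1\min\{\lambda,\kappa\}\bigr)\d(v)\le(\lambda+\kappa)\d(v)$. The inequality $1.1\lambda\le\lambda+\kappa$ fails when $\kappa<0.1\lambda$, which can happen in \cref{sec:regularized}, where $\kappa=\tau\le\rho=\lambda$. Second, the case split for $v=s$ is unnecessary: $(\lambda+\kappa)\d(s)<1$ directly gives $\x^{(\{s\})}(s)=\alpha\bigl(1/\d(s)-\lambda\bigr)>\alpha\kappa$.
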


\begin{proof}
	If $(\lambda + \kappa) \, \d(s) \ge 1$, the algorithm directly outputs $\tp = \vzero$.
	In this case, its corresponding residue vector $\r = \e_s$ satisfies $\vzero \le \r \le (\lambda + \kappa) \d$, so $\tp$ is a $(\lambda + \kappa)$-approximate PageRank vector and the lemma holds.
	
	Now assume $(\lambda + \kappa) \, \d(s) < 1$ and every SDD solver call in the algorithm satisfies the accuracy guarantee in \cref{thm:SDD_solver}.
	We prove by induction that for each active set $S$ encountered by the algorithm, $\x^{(S)}(v) > 0.9\alpha\kappa$ for every $v \in S$.
	For the base case $S = \{s\}$, by \cref{eqn:exact-solution} we have $\x^{(S)}(s) = \alpha \bigl(1 / \d(s) - \lambda\bigr) > \alpha\kappa$, so the base case holds.

	For the inductive step, suppose that the current active set $S$ satisfies the inductive hypothesis and $T := \bigl\{ v \in \partial S: \tr(v) > (\lambda + \kappa) \, \d(v) \bigr\}$ is nonempty.
	For every $v \in T$, \eqref{eqn:active-set-frontier-error} gives
	\begin{align*}
		\r\bigl(\x^{(S)}\bigr)(v) \ge \tr(v) - 0.1 \kappa \, \d(v) > (\lambda + 0.9\kappa) \, \d(v).
	\end{align*}
	Thus, the conditions of \cref{lem:active-set-expansion} are satisfied, giving $\x^{(S \cup T)}(v) \ge \x^{(S)}(v) > 0.9\alpha\kappa$ for $v \in S$ and
	\begin{align*}
		\x^{(S \cup T)}(v) \ge \frac{\alpha}{\d(v)}\bigl(\r\bigl(\x^{(S)}\bigr)(v) - \lambda \, \d(v)\bigr) > 0.9\alpha\kappa
	\end{align*}
	for $v \in T$.
	This completes the induction.

	Next, we inspect the property of the output $\tp = \D\tx$.
	Let $S'$ be the active set at termination and $\tr := \r(\tx)$.
	We have shown that $\x^{(S')}(v) > 0.9\alpha\kappa$ for every $v \in S'$, so \eqref{eqn:active-set-active-residue-error} gives $\vzero \le \tr_{S'} \le (\lambda + \kappa) \d_{S'}$.
	Additionally, \eqref{eqn:active-set-coordinate-error} guarantees that $\tx_{S'}(v) \ge \x^{(S')}(v) - 0.1\alpha\kappa \ge 0$ for every $v \in S'$, and thus $\tr_{V \setminus S'} \ge \vzero$ by \cref{eqn:frontier-residue-formula}.
	On the other hand, the algorithm's stopping rule guarantees that $\tr_{V \setminus S'} \le (\lambda + \kappa) \d_{V \setminus S'}$.
	Together, $\vzero \le \tr \le (\lambda + \kappa) \d$, so $\tp$ is a $(\lambda + \kappa)$-approximate PageRank vector.
	On the other hand, $\x^{(S')} \ge \vzero$ and $\r\bigl(\x^{(S')}\bigr) \ge \vzero$ by \cref{eqn:frontier-residue-formula}, so the mass identity \eqref{eqn:mass-identity} gives $1 = \vone^{\top} \D\x^{(S')} + \vone^{\top} \r\bigl(\x^{(S')}\bigr) \ge \vone_{S'}^{\top} \r\bigl(\x^{(S')}\bigr)_{S'} = \lambda \vol(S')$, which leads to $\vol\bigl(\supp(\tp)\bigr) \le \vol(S') \le 1 / \lambda$, as desired.
	
	For an iteration with active set $S$, the involved SDD system has dimension at most $\vol(S') \le 1 / \lambda$ and also has $O(1 / \lambda)$ nonzero entries, and the required accuracy and failure probability for the solver is $0.01\alpha \min\{\lambda,\kappa\}$ and $\lambda\delta$, respectively.
	Additionally, it takes $O\bigl(\vol(S)\bigr)$ time for inspecting the outer boundary $\partial S$.
	Since every nonterminal iteration adds at least one node and $|S'| \le \vol(S') \le 1 / \lambda$, there are at most $1 / \lambda$ iterations.
	Summing over the iterations and using the running-time bound in \cref{thm:SDD_solver} yield the desired total running-time bound.

	Finally, we bound the probability that some invocation of the SDD solver fails to satisfy the accuracy guarantee in \cref{thm:SDD_solver}.
	Consider the probability that the first failure occurs at the $i$-th call, denoted by $p_i$.
	By the analysis above, the success of the first $1 / \lambda$ calls implies that the algorithm terminates, so $p_i$ can be nonzero only if $i \le 1 / \lambda$.
	For each $i$, $p_i$ is at most the probability that the $i$-th call fails, which is bounded by $\lambda\delta$ by \cref{thm:SDD_solver}.
	Therefore, summing $p_i$ over $1 \le i \le \lfloor 1 / \lambda \rfloor$ yields that the probability that some SDD solver call fails is at most $\delta$, completing the proof.
\end{proof}

Now the main result \cref{thm:ACL} for computing an $\eps$-approximate PageRank vector is a direct consequence of \cref{lem:active-set}.

\begin{proof}[Proof of \cref{thm:ACL}]
	Applying \cref{lem:active-set} with $\lambda := 0.5\eps$ and $\kappa := 0.5\eps$ proves the theorem.
\end{proof}

\subsection{Bounding the Error Introduced by SDD Solvers}

This subsection proves \cref{lem:active-set-solver-error}.
We will use the following standard facts.

\paragraph{Loewner order.}
For symmetric matrices $\M$ and $\N$ of the same dimension, we write $\M \preceq \N$ if $\N - \M$ is positive semidefinite and $\M \prec \N$ if $\N - \M$ is positive definite, and use $\succeq$ and $\succ$ analogously.
Taking the same principal submatrix of both matrices preserves the Loewner order; for any $c > 0$, $\mzero \prec \M \preceq c\N$ implies $\M^{-1} \succeq c^{-1} \N^{-1} \succ \mzero$.

\paragraph{Cauchy-Schwarz inequalities with energy norms.}
Consider a matrix $\M \succ \mzero$ and vectors $\x$ and $\y$ of the same dimension.
Applying the Cauchy-Schwarz inequality to $\M^{1/2}\x$ and $\M^{1/2}\y$ gives
\begin{align}
	\bigl|\x^{\top}\M\y\bigr| \le \bigl\|\M^{1/2}\x\bigr\|_2\bigl\|\M^{1/2}\y\bigr\|_2 = \|\x\|_{\M}\|\y\|_{\M}. \label{eqn:energy-Cauchy-Schwarz}
\end{align}
Similarly, applying it to $\M^{-1/2}\x$ and $\M^{1/2}\y$ gives
\begin{align}
	\bigl|\x^{\top}\y\bigr| \le \bigl\|\M^{-1/2}\x\bigr\|_2\bigl\|\M^{1/2}\y\bigr\|_2 = \|\x\|_{\M^{-1}}\|\y\|_{\M}. \label{eqn:primal-dual-Cauchy-Schwarz}
\end{align}

\paragraph{Facts for the matrix $\L_{\alpha,S}$.}
The standard Laplacian bounds give $\mzero \preceq \D - \A \preceq 2\D$.
Taking principal submatrices and using $\L_{\alpha,S} = \alpha\D_S + (1 - \alpha)(\D_S - \A_S)$ give $\alpha\D_S \preceq \L_{\alpha,S} \preceq 2\D_S$ and therefore $\L_{\alpha,S}^{-1} \preceq \alpha^{-1} \D_S^{-1}$.
We also have
\begin{align}
	\e_v^{\top} \L_{\alpha,S} \, \e_v = \d(v), \quad \e_v^{\top} \L_{\alpha,S}^{-1} \, \e_v \le \frac{1}{\alpha \, \d(v)}, \qquad \forall \, v \in S. \label{eqn:active-set-coordinate-energy-bounds}
\end{align}

\begin{proof}[Proof of \cref{lem:active-set-solver-error}]
	Recall that $\r\bigl(\x^{(S)}\bigr)_S = \lambda \, \d_S$ and $\x^{(S)} \ge \vzero$ implies that $\r\bigl(\x^{(S)}\bigr) \ge \vzero$.
	The assumptions and the mass identity \eqref{eqn:mass-identity} give $1 = \vone^{\top} \D\x^{(S)} + \vone^{\top} \r\bigl(\x^{(S)}\bigr) \ge \vone_S^{\top} \r\bigl(\x^{(S)}\bigr)_S = \lambda \vol(S)$, so $\vol(S) \le 1 / \lambda$.

	Let $\b_S := \alpha(\e_s|_S - \lambda \, \d_S)$.
	Using $\L_{\alpha,S}^{-1} \preceq \alpha^{-1} \D_S^{-1}$, we obtain
	\begin{align*}
		\|\b_S\|_{\L_{\alpha,S}^{-1}}^2 & = \b_S^{\top}\L_{\alpha,S}^{-1} \, \b_S \le \frac{1}{\alpha}\b_S^{\top}\D_S^{-1}\b_S = \alpha \biggl(\frac{\bigl( 1 - \lambda \, \d(s) \bigr)^2}{\d(s)} + \lambda^2\sum_{v \in S\setminus\{s\}}\d(v)\biggr) \\
		& \le \alpha\bigl(1 + \lambda^2\vol(S)\bigr) \le \alpha(1 + \lambda) \le 2\alpha.
	\end{align*}
	Since $\x^{(S)}_S = \L_{\alpha,S}^{-1} \, \b_S$, we have $\bigl\|\x^{(S)}_S\bigr\|_{\L_{\alpha,S}}^2 = \|\b_S\|_{\L_{\alpha,S}^{-1}}^2 \le 2\alpha$.
	Let $\vdelta_{\x} := \tx_S - \x^{(S)}_S \in \R^S$.
	The solver's accuracy guarantee in \cref{thm:SDD_solver} with $\mu := 0.01 \alpha \, \min\{\lambda,\kappa\}$ therefore gives
	\begin{align}
		\bigl\|\vdelta_{\x}\bigr\|_{\L_{\alpha,S}} \le 0.01 \alpha \, \min\{\lambda,\kappa\} \, \sqrt{2\alpha} \le 0.02 \alpha^{3/2} \min\{\lambda,\kappa\}. \label{eqn:active-set-energy-error}
	\end{align}

	For $v \in S$, \cref{eqn:residue-def} gives
	\begin{align*}
		\tr(v) - \lambda \, \d(v) = -\frac{1}{\alpha} \e_v^{\top} \L_{\alpha,S} \, \vdelta_{\x},
	\end{align*}
	which together with \eqref{eqn:energy-Cauchy-Schwarz}, \eqref{eqn:active-set-coordinate-energy-bounds}, and \eqref{eqn:active-set-energy-error} gives
	\begin{align*}
		\bigl|\tr(v) - \lambda \, \d(v)\bigr| \le \frac{1}{\alpha} \|\e_v\|_{\L_{\alpha,S}}\bigl\|\vdelta_{\x}\bigr\|_{\L_{\alpha,S}} \le \frac{1}{\alpha} \sqrt{\d(v)} \cdot 0.02 \alpha^{3/2} \min\{\lambda,\kappa\} \le 0.1 \min\{\lambda,\kappa\} \, \d(v),
	\end{align*}
	proving \eqref{eqn:active-set-active-residue-error}.
	On the other hand, applying \eqref{eqn:primal-dual-Cauchy-Schwarz}, \eqref{eqn:active-set-coordinate-energy-bounds}, and \eqref{eqn:active-set-energy-error} gives
	\begin{align*}
		\bigl|\tx_S(v) - \x^{(S)}(v)\bigr| \le \|\e_v\|_{\L_{\alpha,S}^{-1}} \bigl\|\vdelta_{\x}\bigr\|_{\L_{\alpha,S}} \le \frac{1}{\sqrt{\alpha \, \d(v)}} \cdot 0.02 \alpha^{3/2} \min\{\lambda,\kappa\} \le 0.1 \alpha \kappa,
	\end{align*}
	proving \eqref{eqn:active-set-coordinate-error}.

	For $v \in \partial S$, \cref{eqn:frontier-residue-formula} gives
	\begin{align*}
		\tr(v) - \r\bigl(\x^{(S)}\bigr)(v) = \frac{1 - \alpha}{\alpha}\sum_{u \in N(v) \cap S}\vdelta_{\x}(u).
	\end{align*}
	To bound the absolute sum, we write each summand as $\vdelta_{\x}(u) = \sqrt{\d(u)} \, \vdelta_{\x}(u) / \sqrt{\d(u)}$ and apply the Cauchy-Schwarz inequality, giving
	\begin{align*}
		\Biggl|\sum_{u \in N(v) \cap S}\vdelta_{\x}(u)\Biggr| & = \Biggl|\sum_{u \in N(v) \cap S}\bigl(\sqrt{\d(u)} \, \vdelta_{\x}(u)\bigr) \, \frac{1}{\sqrt{\d(u)}}\Biggr| \\
		& \le \biggl( \sum_{u \in N(v) \cap S} \d(u) \, \vdelta_{\x}(u)^2 \biggr)^{1/2} \biggl( \sum_{u \in N(v) \cap S} \frac{1}{\d(u)} \biggr)^{1/2} \\
		& \le \|\vdelta_{\x}\|_{\D_S} \sqrt{\bigl|N(v) \cap S\bigr|} \le \|\vdelta_{\x}\|_{\D_S}\sqrt{\d(v)},
	\end{align*}
	where we used $\sum_{u \in N(v) \cap S} \d(u) \, \vdelta_{\x}(u)^2 \le \sum_{u \in S} \d(u) \, \vdelta_{\x}(u)^2 = \|\vdelta_{\x}\|_{\D_S}^2$ and $\sum_{u \in N(v) \cap S} \frac{1}{\d(u)} \le \bigl|N(v) \cap S\bigr| \le \d(v)$ in the last line.
	On the other hand, since $\L_{\alpha,S} \succeq \alpha\D_S$, we have $\|\vdelta_{\x}\|_{\D_S} \le \frac{1}{\sqrt{\alpha}} \|\vdelta_{\x}\|_{\L_{\alpha,S}}$.
	Combining, we obtain
	\begin{align*}
		& \phantom{{}={}} \bigl|\tr(v) - \r\bigl(\x^{(S)}\bigr)(v)\bigr| \le \frac{1 - \alpha}{\alpha} \|\vdelta_{\x}\|_{\D_S} \sqrt{\d(v)} \le \frac{1 - \alpha}{\alpha^{3/2}} \|\vdelta_{\x}\|_{\L_{\alpha,S}} \sqrt{\d(v)} \\
		& \le \frac{1}{\alpha^{3/2}} \, 0.02 \alpha^{3/2} \min\{\lambda,\kappa\} \sqrt{\d(v)} \le 0.1 \kappa \, \d(v).
	\end{align*}
	This proves \eqref{eqn:active-set-frontier-error} and completes the proof.
\end{proof}

\section{Solving $\ell_1$-Regularized PageRank} \label{sec:regularized}

This section proves \cref{thm:regularized}.
Recall that the $\ell_1$-regularized PageRank problem is to compute an additive $\xi$-minimizer of $\psi(\cdot)$ defined in \cref{eqn:regularized-objective}.
Throughout, we let $\xast$ be the unique minimizer of $\psi(\cdot)$ and $\Sast := \supp(\xast)$, set $\tau := \min\bigl\{\rho,\xi / (2\alpha)\bigr\}$, and consider invoking \cref{alg:active-set} with internal residue level $\lambda := \rho$ and activation gap $\kappa := \tau$.

Based on applying \cref{lem:active-set} with these parameter settings, we only need to additionally show that no node outside $\Sast$ is activated throughout the algorithm and to bound the objective error.
We will use the following facts that are adapted and derived from the variational characterization of PageRank given in \cite{fountoulakis2019variational}.

\begin{lemma}[\cite{fountoulakis2019variational}] \label[lemma]{lem:regularized-facts}
	It holds that $\psi(\cdot)$ has a unique minimizer $\xast \ge \vzero$, $\vol(\Sast) \le 1 / \rho$, and
	\begin{align*}
		\begin{cases}
			\r(\xast)(v) = \rho \, \d(v), & \text{if } v \in \Sast, \\
			0 \le \r(\xast)(v) \le \rho \, \d(v), & \text{if } v \notin \Sast.
		\end{cases}
	\end{align*}
	Furthermore, if $\rho \, \d(s) < 1$, then $s \in \Sast$; if $\rho \, \d(s) \ge 1$, then $\xast = \vzero$.
\end{lemma}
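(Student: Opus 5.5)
The plan is to derive every claim of \cref{lem:regularized-facts} from the KKT (first-order optimality) conditions of the strongly convex objective $\psi(\cdot)$ in \eqref{eqn:regularized-objective}, and to express those conditions in terms of $\r(\cdot)$ from \cref{eqn:residue-def}.

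\emph{Existence, uniqueness, and optimality conditions.} Since $\L_{\alpha} \succeq \alpha\D \succ \mzero$, the quadratic part of $\psi$ is strongly convex. The $\ell_1$ term is convex, so $\psi$ has a unique minimizer $\xast$. Its gradient is $\L_{\alpha}\x - \alpha\e_s = -\alpha\,\r(\x)$ by \cref{eqn:residue-def}. The subdifferential of $\alpha\rho\|\D\x\|_1$ at $v$ is $\alpha\rho\,\d(v)\,\mathrm{sign}(\x(v))$, interpreted as $[-\alpha\rho\d(v),\alpha\rho\d(v)]$ when $\x(v)=0$. Dividing by $\alpha$, optimality is therefore equivalent to the following: $\r(\xast)(v) = \rho\,\d(v)\,\mathrm{sign}(\xast(v))$ if $\xast(v)\ne0$, and $|\r(\xast)(v)|\le\rho\,\d(v)$ otherwise.

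\emph{Nonnegativity.} I would compare $\psi(|\xast|)$ with $\psi(\xast)$. The $\ell_1$ term is unchanged. The term $-\alpha\e_s^\top\x$ can only decrease. The quadratic term is $\frac12\sum_v \d(v)\x(v)^2 - (1-\alpha)\sum_{(u,v)}\x(u)\x(v)$, and it can only decrease because the off-diagonal entries of $\L_{\alpha}$ are nonpositive. Hence $\psi(|\xast|)\le\psi(\xast)$, and uniqueness gives $\xast = |\xast| \ge \vzero$.

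With $\xast\ge\vzero$, the optimality conditions become $\r(\xast)(v)=\rho\,\d(v)$ on $\Sast$ and $|\r(\xast)(v)|\le\rho\,\d(v)$ off $\Sast$. It remains to prove $\r(\xast)(v)\ge0$ for $v\notin\Sast$. If $v\ne s$, then $\r(\xast)(v) = \frac{1-\alpha}{\alpha}\sum_{u\in N(v)}\xast(u) \ge 0$, because the diagonal term vanishes when $\xast(v)=0$. If $v=s\notin\Sast$, the same computation gives $\r(\xast)(s) = 1 + \frac{1-\alpha}{\alpha}\sum_{u\in N(s)}\xast(u) \ge 0$.

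\emph{Volume bound.} The mass identity \eqref{eqn:mass-identity} together with nonnegativity gives
\begin{align*}
1 = \vone^\top\D\xast + \vone^\top\r(\xast) \ge \sum_{v\in\Sast}\r(\xast)(v) = \rho\vol(\Sast).
\end{align*}
Hence $\vol(\Sast)\le 1/\rho$.

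\emph{Role of $s$.} First suppose $\rho\,\d(s)\ge1$. I claim $\x=\vzero$ satisfies the KKT conditions. Indeed, $\r(\vzero)=\e_s$, so $|\r(\vzero)(v)|\le\rho\,\d(v)$ holds for every $v$. Uniqueness then gives $\xast=\vzero$. Now suppose $\rho\,\d(s)<1$ and assume for contradiction that $s\notin\Sast$. The computation above gives $\r(\xast)(s)\ge1>\rho\,\d(s)$, which violates the KKT inequality at $s$. Therefore $s\in\Sast$.

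The main subtlety I expect is the nonnegativity step, specifically justifying that replacing $\x$ by $|\x|$ does not increase the objective. Everything else follows directly once the gradient is identified with $-\alpha\,\r(\x)$.
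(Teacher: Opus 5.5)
Your proof is correct. The paper does not prove this lemma itself. It imports the facts from \cite{fountoulakis2019variational}, describing them as adapted from the variational characterization there. That source uses a lazy-walk objective in rescaled variables (as the paper remarks), so the citation implicitly relies on a translation step. Your route instead derives every claim directly in the paper's own normalization:
\begin{itemize}
\item The identity $\L_{\alpha}\x - \alpha\e_s = -\alpha\,\r(\x)$ converts the subgradient optimality condition into exactly the residue conditions in the statement.
\item The comparison $\psi(|\x|) \le \psi(\x)$, combined with uniqueness, gives $\xast \ge \vzero$. It uses only the nonpositive off-diagonal entries of $\L_{\alpha}$, the same $M$-matrix structure the paper exploits in \cref{lem:active-set-expansion,lem:no-false-activation}.
\item The bound $\vol(\Sast) \le 1/\rho$ is the same mass-identity argument via \eqref{eqn:mass-identity} that the paper uses to get $\vol(S') \le 1/\lambda$ in \cref{lem:active-set}.
\item The claims about $s$ follow from checking the optimality conditions at $\vzero$, and from $\r(\xast)(s) \ge 1$ whenever $\xast(s) = 0$.
\end{itemize}
The citation buys brevity. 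Your derivation buys a lemma verified for the non-lazy objective \eqref{eqn:regularized-objective} exactly as stated, using only tools already present in the paper.
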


For $\x \ge \vzero$, we have $\|\D\x\|_1 = \d^{\top}\x$.
We use
$\nabla\psi(\x) := \L_{\alpha} \, \x - \alpha \, \e_s + \alpha\rho \d = \alpha \bigl(\rho \d - \r(\x)\bigr)$
to denote a valid subgradient of $\psi(\cdot)$ at $\x$, where we used $\r(\x) := \e_s - \alpha^{-1} \L_{\alpha} \, \x$ as in \cref{eqn:residue-def}.
Consequently, convexity gives
\begin{align}
	\psi(\y) - \psi(\x) \ge \bigl\langle \nabla\psi(\x),\y-\x \bigr\rangle, \quad \forall \, \x,\y \ge \vzero. \label{eqn:regularized-subgradient-inequality}
\end{align}
The only optimization facts that we will use are \cref{lem:regularized-facts} and the convexity property~\eqref{eqn:regularized-subgradient-inequality}.

\begin{lemma} \label[lemma]{lem:no-false-activation}
	Assume $s \in S \subseteq \Sast$.
	We have $\x^{(S)}_S \le \xast_S$.
	Furthermore, if $v \in \partial S$ satisfies $\r\bigl(\x^{(S)}\bigr)(v) > \rho \, \d(v)$, then $v \in \Sast$.
\end{lemma}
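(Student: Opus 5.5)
Both claims will follow from the M-matrix argument used in \cref{lem:active-set-expansion}, this time comparing $\x^{(S)}$ with $\xast$ instead of with $\x^{(S\cup T)}$. Here $\lambda = \rho$, so $\x^{(S)}_S = \alpha\L_{\alpha,S}^{-1}(\e_s|_S - \rho\,\d_S)$ and $\r(\x^{(S)})_S = \rho\,\d_S$.

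\textbf{Step 1: $\x^{(S)}_S \le \xast_S$.}
Write $\y := \xast - \x^{(S)}$. By \cref{eqn:residue-def}, $\L_{\alpha}\y = \alpha\bigl(\r(\x^{(S)}) - \r(\xast)\bigr)$. Restricting to the rows in $S$ and splitting $\L_\alpha$ by blocks gives
\begin{align*}
	\L_{\alpha,S}\,\y_S - (1-\alpha)\A_{S,V\setminus S}\,\y_{V\setminus S} = \alpha\bigl(\r(\x^{(S)}) - \r(\xast)\bigr)_S .
\end{align*}
Since $S \subseteq \Sast$, \cref{lem:regularized-facts} gives $\r(\xast)_S = \rho\,\d_S = \r(\x^{(S)})_S$, so the right-hand side vanishes. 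Outside $S$ we have $\x^{(S)}_{V\setminus S} = \vzero$ and $\xast \ge \vzero$, hence $\y_{V\setminus S} \ge \vzero$. Therefore
\begin{align*}
	\L_{\alpha,S}\,\y_S = (1-\alpha)\A_{S,V\setminus S}\,\xast_{V\setminus S} \ge \vzero .
\end{align*}
As in \cref{lem:active-set-expansion}, $\L_{\alpha,S}$ is a nonsingular $M$-matrix, so $\L_{\alpha,S}^{-1} \ge \mzero$ entrywise. It follows that $\y_S \ge \vzero$, that is, $\x^{(S)}_S \le \xast_S$. Combined with $\x^{(S)}_{V\setminus S} = \vzero \le \xast_{V\setminus S}$, this gives $\x^{(S)} \le \xast$ on all of $V$.

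\textbf{Step 2: the activation claim.}
Take $v \in \partial S$, so $v \notin S$. By \cref{eqn:frontier-residue-formula} and $\supp(\x^{(S)}) \subseteq S$,
\begin{align*}
	\r(\x^{(S)})(v) = \frac{1-\alpha}{\alpha}\sum_{u\in N(v)\cap S}\x^{(S)}(u).
\end{align*}
For $\xast$, the $v$-th row of \cref{eqn:residue-def} gives
\begin{align*}
	\r(\xast)(v) = -\frac{1}{\alpha}\d(v)\,\xast(v) + \frac{1-\alpha}{\alpha}\sum_{u\in N(v)}\xast(u),
\end{align*}
where the $\e_s$ term drops because $v \ne s$.

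Argue by contradiction and suppose $v \notin \Sast$, so $\xast(v) = 0$. Using $\xast \ge \vzero$ to drop the neighbors outside $S$, and then Step 1,
\begin{align*}
	\r(\xast)(v) = \frac{1-\alpha}{\alpha}\sum_{u\in N(v)}\xast(u) \ge \frac{1-\alpha}{\alpha}\sum_{u\in N(v)\cap S}\xast(u) \ge \r(\x^{(S)})(v) > \rho\,\d(v).
\end{align*}
This contradicts $\r(\xast)(v) \le \rho\,\d(v)$ for $v \notin \Sast$ from \cref{lem:regularized-facts}. Hence $v \in \Sast$.

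\textbf{Main obstacle.}
The only delicate point is in Step 1: the rows of $\L_\alpha$ indexed by $S$ couple to $\xast$ on $V\setminus S$, where $\xast$ need not vanish, so the two systems do not match exactly. This is handled by the sign of the coupling term: $-(1-\alpha)\A_{S,V\setminus S}\,\xast_{V\setminus S}$ is nonpositive, so moving it to the other side makes the right-hand side nonnegative, as required.
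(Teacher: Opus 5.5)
Your proof is correct and follows essentially the same route as the paper. Step 1 uses the same M-matrix comparison $\L_{\alpha,S}(\xast_S - \x^{(S)}_S) \ge \vzero$, made possible by the nonpositive coupling to $\xast$ outside $S$. Step 2 is the same frontier-residue comparison against $\r(\xast)(v) \le \rho\,\d(v)$. The only cosmetic difference is that you phrase Step 1 through the residue difference $\L_\alpha(\xast - \x^{(S)})$ with coupling to all of $V \setminus S$, whereas the paper restricts the optimality equation to the block $[S, \Sast \setminus S]$.
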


\begin{proof}
	The optimality conditions in \cref{lem:regularized-facts} and \cref{eqn:residue-def} imply that $(\L_{\alpha} \, \xast)_{\Sast} = \alpha (\e_s|_{\Sast} - \rho \, \d_{\Sast})$.
	Restricting this equation to the rows indexed by $S$ gives
	\begin{align*}
		\L_{\alpha,S} \, \xast_S + \L_{\alpha}[S,\Sast \setminus S] \, \xast_{\Sast \setminus S} = \alpha (\e_s|_S - \rho \, \d_S),
	\end{align*}
	where $\L_{\alpha}[S,\Sast \setminus S]$ denotes the submatrix of $\L_{\alpha}$ with rows indexed by $S$ and columns indexed by $\Sast \setminus S$.
	Since the entries in $\L_{\alpha}[S,\Sast \setminus S]$ are nonpositive and $\xast_{\Sast \setminus S} \ge \vzero$ by \cref{lem:regularized-facts}, we have
	\begin{align*}
		\L_{\alpha,S} \, \xast_S \ge \alpha (\e_s|_S - \rho \, \d_S) = \L_{\alpha,S} \, \x^{(S)}_S.
	\end{align*}
	Since $\L_{\alpha,S}^{-1}$ is entrywise nonnegative, multiplying this inequality by $\L_{\alpha,S}^{-1}$ proves $\x^{(S)}_S \le \xast_S$.

	For the second claim, suppose for contradiction that such a $v \notin \Sast$.
	Since $S \subseteq \Sast$, the residue formula in \cref{eqn:frontier-residue-formula} and the first claim yield
	\begin{align*}
		\r\bigl(\x^{(S)}\bigr)(v) & = \frac{1 - \alpha}{\alpha}\sum_{u \in N(v) \cap S}\x^{(S)}(u) \le \frac{1 - \alpha}{\alpha}\sum_{u \in N(v) \cap S} \xast(u) \\
		& \le \frac{1 - \alpha}{\alpha} \sum_{u \in N(v) \cap \Sast} \xast(u) = \r(\xast)(v).
	\end{align*}
	The optimality conditions at $v \notin \Sast$ imply $\r(\xast)(v) \le \rho \, \d(v)$.
	Hence $\r\bigl(\x^{(S)}\bigr)(v) \le \rho \, \d(v)$, contradicting the assumption that $\r\bigl(\x^{(S)}\bigr)(v) > \rho \, \d(v)$.
	Therefore, $v \in \Sast$, as desired.
\end{proof}

\begin{proof}[Proof of \cref{thm:regularized}]
	Assume that every SDD solver call made by the algorithm satisfies the accuracy guarantee in \cref{thm:SDD_solver}, which happens with probability at least $1-\delta$ by the same analysis in the proof of \cref{lem:active-set}.
	By \cref{lem:active-set}, the returned $\tx$ is nonnegative and $\tr := \r(\tx)$ satisfies $\vzero \le \tr \le (\rho+\tau) \d$.
	Thus, $\D\tx$ is a $(\rho+\tau)$-approximate PageRank vector, which is also a $(2\rho)$-approximate PageRank vector since $\tau \le \rho$.
	We only need to establish the objective accuracy and the running-time bound involving $\Sast$.
	
	By \cref{lem:regularized-facts}, $\xast \ge \vzero$ and $\r(\xast) \ge \vzero$, so applying the mass identity in \cref{eqn:mass-identity} gives $\vone^{\top} \D \xast = \d^{\top} \xast \le 1$.

	If $\tx = \vzero$, applying \eqref{eqn:regularized-subgradient-inequality} with $\x = \vzero$ and $\y = \xast$, together with $\tr \le (\rho+\tau) \d$, gives
	\begin{align*}
		\psi(\vzero) - \psi(\xast) \le -\bigl\langle \nabla\psi(\vzero),\xast \bigr\rangle = -\alpha (\rho \d - \tr)^{\top} \xast \le \alpha\tau \, \d^{\top}\xast \le \alpha\tau \le \xi.
	\end{align*}
	Thus, the theorem holds when the algorithm returns $\vzero$.

	Assume from now on that $\tx \ne \vzero$.
	We prove by induction that every active set $S$ encountered by the algorithm satisfies $S \subseteq \Sast$.
	The base case follows from \cref{lem:regularized-facts} and $(\rho+\tau) \, \d(s) < 1$.
	For the inductive step, suppose $S \subseteq \Sast$ and the algorithm adds a nonempty set $T$ to $S$.
	By the same analysis as in the proof of \cref{lem:active-set}, it holds that $\r\bigl(\x^{(S)}\bigr)(v) > (\rho + 0.9\tau) \, \d(v) > \rho \, \d(v)$ for every $v \in T$.
	Therefore, \cref{lem:no-false-activation} gives $T \subseteq \Sast$, proving the induction.

	Next, we bound the objective error from the residue guarantees.
	Let $S'$ be the final active set.
	For every $v \in S'$, applying \eqref{eqn:active-set-active-residue-error} with $\lambda := \rho$ and $\kappa := \tau$ gives
	\begin{align*}
		\bigl|\tr(v) - \rho \, \d(v)\bigr| \le 0.1\min\{\rho,\tau\} \, \d(v) = 0.1\tau \, \d(v).
	\end{align*}
	Since $\nabla\psi(\tx) = \alpha(\rho \d - \tr)$, it follows that
	\begin{align*}
		\nabla\psi(\tx)(v) = \alpha\bigl(\rho \, \d(v) - \tr(v)\bigr) \le \alpha \, \bigl|\rho \, \d(v) - \tr(v)\bigr| \le 0.1\alpha\tau \, \d(v).
	\end{align*}
	On the other hand, the residue bound $\tr \le (\rho+\tau) \d$ gives
	\begin{align*}
		\nabla\psi(\tx)(v) = \alpha\bigl(\rho \, \d(v) - \tr(v)\bigr) \ge -\alpha\tau \, \d(v), \quad \forall \, v \in V.
	\end{align*}
	Since $\tx \ge \vzero$ and $\tr \ge \vzero$, the mass identity also gives $\d^{\top}\tx \le 1$.
	Applying \eqref{eqn:regularized-subgradient-inequality} with $\x = \tx$ and $\y = \xast$, and using $\tx_{V\setminus S'} = \vzero$, we obtain
	\begin{align*}
		\psi(\tx) - \psi(\xast) & \le \bigl\langle \nabla\psi(\tx),\tx-\xast \bigr\rangle = \sum_{v \in S'}\nabla\psi(\tx)(v)\,\tx(v) - \sum_{v \in V}\nabla\psi(\tx)(v)\,\xast(v) \\
		& \le 0.1\alpha\tau \, \d^{\top}\tx + \alpha\tau \, \d^{\top}\xast \le 1.1\alpha\tau \le \xi,
	\end{align*}
	as desired.

	Finally, $S' \subseteq \Sast$ guarantees that there are at most $|\Sast|$ iterations and each involved SDD system has dimension at most $|\Sast| \le 1 / \rho$ and has at most $\tvol(\Sast)$ nonzero entries.
	Additionally, it takes $O\bigl(\vol(\Sast)\bigr)$ time for inspecting the outer boundary $\partial S$.
	These bounds yield the desired running-time bound and finish the proof.
\end{proof}

\section{Concluding Remarks}

Our algorithm computes an ACL $\eps$-approximate PageRank vector in $\tO(1 / \eps^2)$ time, whereas the classic \push algorithm has running time $O\bigl(1 / (\alpha\eps)\bigr)$.
Neither bound uniformly dominates the other, so the currently known upper bound can be written as
\begin{align*}
	\tO\Bigl(\min\Bigl\{\frac{1}{\eps^2},\frac{1}{\alpha\eps}\Bigr\}\Bigr).
\end{align*}
As a result, our upper bound rules out general lower bounds of the form $\Omega\bigl(1 / (\alpha\eps)\bigr)$ or $\Omega\bigl(1 / (\sqrt{\alpha} \, \eps)\bigr)$ that hold uniformly over all reasonable parameter choices in this model.
Meaningful lower bounds for this problem must therefore distinguish different parameter regimes.

An important open direction is to reuse the computation between successive SDD solves in our active-set algorithm.
These active sets are nested and the involved SDD systems are structured, but the present algorithm solves every SDD system from scratch.
An incremental or warm-started implementation of the SDD solvers could plausibly reduce the extra factor corresponding to the number of iterations and reduce the running time toward $\tO(1 / \eps)$.
Such a bound would also substantially strengthen the resulting local graph clustering algorithm, giving nearly linear dependence on the target volume while retaining only a polylogarithmic dependence on the inverse target conductance.
This perspective also suggests that, for $\ell_1$-regularized PageRank, the $1 / \sqrt{\alpha}$ dependence targeted by the open problem of Fountoulakis and Yang~\cite{fountoulakis2022open} need not be the ultimate limit.

\section{Acknowledgments}

This research was supported in part by National Natural Science Foundation of China (No. U2241212).

\printbibliography

\end{document}